\renewcommand{\vec}[1]{\boldsymbol{#1}} 
\newcolumntype{d}[1]{D{.}{.}{#1}} 
\newtheorem{property}{Property}
\newtheorem{corollary}{Corollary}[property]
\newcommand{\ts}{\ensuremath{ \widetilde{\mathcal{S}}}}
\newcommand{\tsp}{\ensuremath{ \widetilde{\mathcal{S}}_{p} }}
\newcommand{\trp}{\ensuremath{ \delta \widetilde{R}_{p} }}
\newcommand{\trr}{\ensuremath{ \delta \widetilde{R} }}
\newcommand{\tlp}{\ensuremath{ \delta \widetilde{L}_{p} }}
\begin{document}


\title{Elucidating the finite temperature quasiparticle random phase approximation}

\author{E. M. Ney}
\email[]{evan.ney@unc.edu}
\affiliation{Department of Physics and Astronomy, CB 3255, University of North
Carolina, Chapel Hill, North Carolina 27599-3255, USA\looseness=-1}

\author{A. Ravli\'c}
\email[]{aravlic@phy.hr}
\affiliation{Department of Physics, Faculty of Science, University of Zagreb, Bijeni\v{c}ka c. 32, 10000 Zagreb, Croatia}

\author{J. Engel}
\email[]{engelj@unc.edu}
\affiliation{Department of Physics and Astronomy, CB 3255, University of North
Carolina, Chapel Hill, North Carolina 27599-3255, USA\looseness=-1}

\author{N. Paar}
\email[]{npaar@phy.hr}
\affiliation{Department of Physics, Faculty of Science, University of Zagreb, Bijeni\v{c}ka c. 32, 10000 Zagreb, Croatia}

\date{\today}

\begin{abstract}
In numerous astrophysical scenarios, such as core-collapse supernovae and neutron star mergers, as in well as heavy-ion collision experiments, transitions between thermally populated nuclear excited states have been shown to play an important role. Due to its simplicity and excellent extrapolation ability, the finite-temperature quasiparticle random phase approximation (FT-QRPA) presents itself as an efficient method to study the properties of hot nuclei.  The statistical ensembles in the FT-QRPA make the theory much richer than its zero-temperature counterpart, but also obscure the meaning of various physical quantities. In this work, we clarify several aspects of the FT-QRPA, including notations seen in the literature, and demonstrate how to extract physical quantities from the theory. To exemplify the correct treatment of finite-temperature transitions, we place special emphasis on the charge-exchange transitions described within the proton-neutron FT-QRPA (FT-PNQRPA). With the FT-PNQRPA built on the nuclear energy-density functional theory, we obtain solutions using a relativistic matrix approach and also the non-relativistic finite amplitude method. We show that the Ikeda sum rule is fulfilled with the proper treatment of de-excitations from thermally populated excited states. Additionally, we demonstrate the impact of these transitions on stellar electron capture (EC) rates in ${}^{58,78}$Ni. While their inclusion does not influence the EC rates in ${}^{58}$Ni, the rates in ${}^{78}$Ni are dominated by de-excitations for temperatures $T > 0.5$~MeV. In systems with a large negative $Q$-value, the inclusion of de-excitations within the FT-QRPA is necessary for a complete description of reaction rates at finite temperature.
\end{abstract}

\maketitle

\section{Introduction}

Nuclear decays from thermally populated excited states are ubiquitous, occurring in settings from heavy-ion fusion reactions in laboratories to extreme astrophysical environments such as supernovae and neutron star mergers~\cite{JANKA200738,Egido_1993,Thielemann_2017,KAJINO2019109,Shlomo_2004,Gross_1990}. The temperature-dependent mean field theory is a reasonable method for describing thermally equilibrated nuclear systems because its numerical efficiency scales slowly with system size, and symmetry-unrestricted mean field calculations with density-dependent effective interactions provide an accurate, microscopic description of nuclear structure and dynamics. Moreover, nuclear density functional theory (DFT) based on energy-density functionals (EDFs) with no connection to an underlying interaction can effectively incorporate correlations while still using the simple mean-field description~\cite{NIKSIC2011519,Schunck_book,ROCAMAZA201896,ROCAMAZA201896}. 

One common approach to construct finite-temperature mean-field theories is based on statistical ensembles~\cite{Goodman1981,Sano_1963,Egido_1993}. Expectation values of an operator at finite-temperature are taken with respect to the mean-field statistical density operator, which implies the Boltzmann weighted summation over quasiparticle states, assuming either the canonical or grand-canonical ensemble. In the case of superfluid nuclei, the finite-temperature Hartree-Fock-Bogoliubov (FT-HFB) equations were derived in Ref.~\cite{Goodman1981,Egido_1993}, having the same form as the zero-temperature HFB equations but involving temperature-dependent densities~\cite{PhysRevC.34.1942,Egido_1993,Reiss_1999,PhysRevC.68.034327,PhysRevC.88.034308}.
On the other hand, instead of using the statistical ensembles one can use the notion of a thermal vacuum within the thermo-field dynamics (TFD)~\cite{Takahashi_1996,Umezawa1982}. The thermal vacuum within TFD is defined such that it yields expectation values equivalent to the corresponding statistical ensemble thermal averages. To this aim, the dimension of the original Hamiltonian is doubled by introducing the so-called fictitious system operators. The thermal vacuum is then constructed by a Bogoliubov transformation between original and fictitious system operators.

The finite-temperature quasiparticle random-phase approximation (FT-QRPA) is the linearized time-dependent FT-HFB theory. Its zero-temperature limit is well known for its success in describing small-amplitude collective motion~\cite{Paar_2007, ROCAMAZA201896}. The FT-QRPA, much like FT-HFB, can be built either using statistical ensembles~\cite{RING1984261, Sommermann1983} or by the TFD formalism (also known as the thermal QRPA or TQRPA)~\cite{Civitarese1992}. The latter is achieved by introducing the phonon creation operator containing combinations of both the original and fictitious quasiparticle operators and diagonalizing the corresponding residual interaction Hamiltonian in that basis. Thus, the transitions between the original one-phonon states are described as excitations while the transitions to fictitious one-phonon states represent de-excitations (transitions with energy below the $Q$-value threshold). The TFD formulation has been applied to study the thermal evolution of the multipole strength functions and weak-interaction rates in numerous works~\cite{PhysRevC.42.1335,PhysRevC.94.015805,PhysRevC.92.045804,PhysRevC.101.025805,PhysRevC.81.015804,Dzhioev2009,PhysRevC.100.025801,Dzhioev2008}.

The FT-QRPA, based on taking the FT-HFB thermal averages, was derived in Ref.~\cite{Sommermann1983}, with a plethora of implementations employing various model interactions. Especially significant are the early implementations based on schematic models~\cite{RING1984261,BARRANCO1985445,PhysRevC.63.034323,Civitarese_1999,PhysRevC.62.054318} and recent self-consistent implementations based on nuclear EDFs, both non-relativistic~\cite{KHAN2004311,PhysRevC.80.065808,PhysRevC.80.055801,PhysRevC.96.024303,Yuksel2019} and relativistic~\cite{PhysRevC.83.045807,NIU2009315,PhysRevC.101.044305,PhysRevC.104.064302}. Furthermore, in the recent works of Refs.~\cite{PhysRevLett.121.082501,PhysRevC.100.024307,Litvinova2019,PhysRevC.103.024326,LITVINOVA2020135134} the finite-temperature nuclear response is formulated beyond the one-loop approximation of the FT-QRPA within the finite-temperature relativistic time-blocking approximation (FT-RTBA). 

The FT-QRPA based on thermal averages is, however, conceptually difficult to interpret. Furthermore, several different notations have been used for the FT-QRPA, and the connection between these notations has not been explicitly clarified.
Especially confusing is the use of the so-called thermal prefactor which multiplies the overall strength function. It is defined as $(1-e^{-\beta \omega})^{-1}$ where $\omega$ is the excitation energy and $\beta = 1/(k_B T)$, $T$ being the temperature and $k_B$ the Boltzmann constant. This prefactor, which is typically derived by invoking the principle of detailed balance~\cite{Sommermann1983,RING1984261,Chomaz1990,PhysRevC.92.045804}, has not been consistently included in FT-QRPA calculations. In part, this may be attributed to uncertainty as to whether detailed balance applies in specific physical applications, such as weak reactions in stellar environments.
The thermal prefactor significantly modifies low-lying strength at finite-temperature, as was exemplified in Refs.~\cite{Chomaz1990, PhysRevC.100.024307}, so its use should be understood and justified. In addition to the prefactor, the identification of transitions in FT-QRPA strength function is also difficult. In a shell model calculation, for example, every transition is calculated explicitly. However, the FT-QRPA includes transitions among phonons in a thermal ensemble of many-quasiparticle states, and it is not immediately clear where specific excitations and de-excitations appear in the strength function. This obfuscates comparisons between strength functions computed with the FT-QRPA and other approaches like the shell model. In this work, we aim to clarify all these aspects of the FT-QRPA and emphasize the proper way to use the theory to study decays in hot nuclei.

We demonstrate the correctness of our discussion by investigating implications for the temperature evolution of Gamow-Teller (GT) strength in ${}^{58}$Ni as well as the electron capture (EC) rates in ${}^{58,78}$Ni for temperatures up to 2 MeV. These nuclei are found in abundance during the late-stage core-collapse supernovae (CCSNe) evolution~\cite{Sullivan_2015, JANKA200738,Langanke_2021}. Calculations are performed using two different state-of-the-art models, the non-relativistic Skyrme FT-QRPA based on the proton-neutron finite-amplitude method (PNFAM) and the relativistic proton-neutron FT-QRPA (FT-PNRQRPA) employing the D3C${}^*$ interaction (refer to Ref.~\cite{PhysRevC.105.055801} for more details).

This paper is structured as follows: In Sec.~\ref{sec:props} we expound on the notation and properties of the FT-QRPA seen in the literature, supplemented with Appendix~\ref{sec:appa}. Section~\ref{sec:transitions} demonstrates how the FT-QRPA approximates the exact linear response function and explains how to extract physical quantities from it. To illustrate the discussion in Sec.~\ref{sec:transitions}, we calculate Gamow-Teller strength functions using the charge-exchange FT-QRPA in Sec.~\ref{sec:bgt}. We discuss the results in the context of the Ikeda sum rule, which is derived in the present formulation of the FT-QRPA (with additional details in Appendices~\ref{sec:appb} and~\ref{sec:appc}), and show that it is fulfilled if the FT-QRPA strength function is treated correctly. Finally, in Sec.~\ref{sec:ec_rates} we study implications for the temperature evolution of stellar EC rates in ${}^{58,78}$Ni.

\section{Properties of the FT-QRPA}\label{sec:props}
\subsection{Finite-temperature linear response}

The seminal work of Ref.~\cite{Sommermann1983} derived the linear response equations for FT-HFB ensembles~\cite{Goodman1981}. There, the resulting expressions are given in a particular temperature-symmetric form. In this work, however, we focus on a less symmetrical but more general form and discuss how it relates to Ref.~\cite{Sommermann1983}. To see this temperature symmetry more clearly, we keep all temperature-dependent factors separate from temperature-independent factors in our discussion, except when indicated by a tilde.

The FT-HFB linear response equations are derived by linearizing the time-dependent FT-HFB equations. We can write the result from Ref.~\cite{Sommermann1983} compactly as,
\begin{equation}\label{eq:ft_linear_response}
\begin{gathered}
  \Big[ \widetilde{\mathcal{S}} - \omega M \Big] \delta \widetilde{R}(\omega) =  -T \mathcal{F}(\omega)
  \\
  \widetilde{\mathcal{S}} \equiv T\mathcal{H} + \mathcal{E}
  ,\quad
  \delta \widetilde{R} \equiv T \delta R
  \,,
\end{gathered}
\end{equation}
where above matrices and vectors are defined in an extended 4-component supermatrix space whose elements live in an enlarged two-quasiparticle space. The matrix $\mathcal{H}$ represents the residual interaction, where expressions for its sub-matrices appear in Appendix~B of Ref.~\cite{Sommermann1983}, while the quantity $M$ takes the role of the metric in the 4-component supermatrix space. Matrix elements of $T$ and $\mathcal{E}$ depend on the quasiparticle occupations, $f_\alpha = \left[ 1 + \exp \left(E_\alpha/k_B T \right) \right]^{-1},
$ and energies, $E_\alpha$, for two quasiparticles. Using the abbreviations $E^{\pm}_{\alpha \beta} \equiv E_{\alpha} \pm E_{\beta}$ and $f^{\pm}_{\alpha \beta} \equiv f_{\alpha} \pm f_{\beta}$, the matrices in Eq.~\eqref{eq:ft_linear_response} are defined as,
\begin{equation}
\begin{aligned}
    T_{\alpha \beta, \gamma\delta}
    &=\text{diag}[ f_{\beta \alpha}^-, (1 - f_{\alpha \beta}^+), (1 - f_{\alpha \beta}^+),  f_{\beta \alpha}^-] \delta_{\gamma \delta}
\\
    \mathcal{E}_{\alpha \beta, \gamma\delta}
    &=\text{diag}[E_{\alpha \beta}^-,  E_{\alpha \beta}^+, E_{\alpha \beta}^+, E_{\alpha \beta}^-] \delta_{\gamma \delta}
\\
    M_{\alpha\beta,\gamma\delta}
    &= \text{diag}[\phantom{-}1,\phantom{-}1,-1,-1] \delta_{\alpha \beta, \gamma \delta}
\\
    \mathcal{H}_{\alpha \beta, \gamma\delta}
    &= \frac{\partial H_{\alpha \beta}}{\partial R_{\gamma\delta}}
    \,.
\end{aligned}
\end{equation}
Finally, the vectors in Eq.~\eqref{eq:ft_linear_response} are the density response, ${\delta {R}_{\alpha\beta}(\omega) = (P_{\alpha\beta}, X_{\alpha\beta}, Y_{\alpha\beta}, Q_{\alpha\beta})}$, and the external field, $\mathcal{F}_{\alpha\beta}(\omega) = (F^{11}_{\alpha\beta}, F^{20}_{\alpha\beta}, F^{02}_{\alpha\beta}, F^{\bar{11}}_{\alpha\beta})$.

The connection to Ref.~\cite{Sommermann1983} comes from imposing restrictions on the form of $T$ and grouping the temperature-dependent factors in a particular way. Assuming the system is not degenerate, without loss of generality we can construct the two-quasiparticle basis $\{\alpha\beta\}$ in an order such that $E_\alpha > E_\beta$. Then $T$ is positive definite, and we may define powers of $T$ in the sense that $T = T^{1-p} T^p$ for $0 \le p \le 1 \in \mathbb{R}$. By eliminating a factor of $T^{1-p}$ on the left from both sides, we see that Eq.~\eqref{eq:ft_linear_response} is just one of an infinite class of equations defined by different temperature dependence:
\begin{equation}\label{eq:ft_linear_response_p}
\begin{gathered}
    \Big[ \widetilde{\mathcal{S}}_p - \omega M \Big] \delta \widetilde{R}_p(\omega) =  - T^p \mathcal{F}(\omega)
    \\
    \widetilde{\mathcal{S}}_p \equiv T^p \mathcal{H} T^{1-p} + \mathcal{E}
    ,\quad
  \delta \widetilde{R}_p \equiv T^p \delta R
  \,.
\end{gathered}
\end{equation}
Equation~\eqref{eq:ft_linear_response} corresponds to $p=1$. The only member of this class of equations that contains a Hermitian $\widetilde{\mathcal{S}}_p$ is the one where $p=1/2$, which is exactly the temperature-symmetric set of equations discussed in Ref.~\cite{Sommermann1983}.

We note that if $T$ is positive semidefinite (e.g., due to degeneracies) this gives rise to redundant degrees of freedom for which the linear response equations read ${0=0}$. In these situations, we must work in the reduced space of non-trivial degrees of freedom for the above arguments to hold. Otherwise a power of $T$ cannot formally be factored out of Eq.~\eqref{eq:ft_linear_response}.

Before moving on to discuss the FT-QRPA equations, we wish to mention an additional way to view Eq.~\eqref{eq:ft_linear_response}. Aside from the $p=1/2$ formulation discussed above, there is only one other possible formulation of the finite-temperature linear response equations that is based on a Hermitian matrix. It comes from grouping the temperature dependence into the metric to re-write the equations as,
\begin{equation}\label{eq:ft_linear_response_m}
\begin{gathered}
\Big[ \widetilde{\mathcal{S}}_{M} - \omega \widetilde{M} \Big] {\delta R}(\omega) =  - T \mathcal{F}(\omega)
\\
\widetilde{\mathcal{S}}_{M} \equiv T \mathcal{H} T + \mathcal{E}T
,\quad
\widetilde{M} \equiv MT
\,.
\end{gathered}
\end{equation}
The temperature-dependent metric is also used, for example, in the thermal RPA theory developed in Ref.~\cite{TANABE1986129}. As we will show, the properties of the FT-QRPA matrix arising from this formulation are more natural to compare to those of the zero-temperature equations. 

We emphasize that, while Eq.~\eqref{eq:ft_linear_response} is always true (it is simply the linear response of the FT-HFB theory), the other formulations based on $\widetilde{\mathcal{S}}_M$ or $\widetilde{\mathcal{S}}_{p\neq 1}$ require careful treatment of the matrix $T$. In the latter formulation, it must be constructed in a basis ordered such that $T$ is positive definite so that $T^{p}$ is well-defined for non-integer $p$. Moreover, in both formulations we must be careful to work only with non-trivial degrees of freedom, which may be a sub-block of the full two-quasiparticle space. The original formulation, Eq.~\eqref{eq:ft_linear_response}, is more useful in approaches that solve the linear response equations directly, such as those based on the finite amplitude method~\cite{Nakatsukasa2007,Avogadro2011,Mustonen2014}. Formulations based on the Hermitian matrices $\widetilde{\mathcal{S}}_M$ or $\widetilde{\mathcal{S}}_{p= 1/2}$ are better used in matrix methods which solve the FT-QRPA eigenvalue problem. Thus, the connection between all formulations is important to illustrate. In the following sections we focus on the original formulation, Eq.~\eqref{eq:ft_linear_response}, and then discuss how it relates to the $p$-dependent (Eq.~\eqref{eq:ft_linear_response_p}) and temperature-dependent metric (Eq.~\eqref{eq:ft_linear_response_m}) formulations.

\subsection{Finite-temperature QRPA}

The free response of Eq.~\eqref{eq:ft_linear_response} results in the FT-QRPA equations, a non-Hermitian eigenvalue problem which reads,
\begin{equation}\label{eq:ftqrpa}
    \big( M \widetilde{\mathcal{S}}\, \big) \,  \delta \widetilde{R}_k = \Omega_k\, \delta \widetilde{R}_k
    \,.
\end{equation}
Again, we see that by taking the appropriate basis order and non-trivial sub-space we can eliminate a factor of $T^{1-p}$ on the left from both sides of this equation to get a family of eigenvalue problems for the matrix $M \widetilde{\mathcal{S}}_p$. These matrices share the same eigenvalues as in Eq.~\eqref{eq:ftqrpa}, but have eigenvectors $\delta \widetilde{R}^k_p = T^p \delta R^k$ that differ in the temperature dependence. Similarly, under the same assumptions we may multiply both sides of Eq.~\eqref{eq:ftqrpa} by $T^{-1}$ to show that $\widetilde{M}^{-1} \widetilde{\mathcal{S}}_M$ also shares the same eigenvalues and has temperature-independent eigenvectors $\delta R_k$.

Having different eigenvectors for the different formulations raises the question as to whether these eigenvalue problems are strictly equivalent. To address this, we examine several properties of the $p$-dependent class of equations in Appendix~\ref{sec:appa}. There we show that all physical quantities are independent of $p$, and therefore the formulations are indeed equivalent. For example, we show that the scalar product of left and right eigenvectors is independent of $p$. Furthermore, if the conditions are met such that the eigenvectors are orthogonal, this scalar product defines the normalization condition
\begin{equation}\label{eq:orthonormality}
    (\delta R^k)^\dagger M T \, \delta R^l = \delta_{kl}
    \,,
\end{equation}
which is also independent of $p$. The same results follow immediately for solutions of the temperature-dependent metric formulation, where the eigenvectors are temperature-independent and the metric becomes $\widetilde{M}=MT$.

It is also instructive to discuss here the equivalence of the stability condition for the different formulations. In Appendix~\ref{sec:appa} we prove that the stability for all formulations depends only on the Hermitian matrix $\widetilde{\mathcal{S}}_M$, and can be stated concisely as,
\begin{equation}\label{eq:ft_stability}
    \widetilde{\mathcal{S}}_M > 0
    \,.
\end{equation}
If $\widetilde{\mathcal{S}}_M$ is positive-definite, it follows that the eigenvalues of $\widetilde{M}^{-1} \widetilde{\mathcal{S}}_M$ and $M \widetilde{\mathcal{S}}_p$ are real, and the sign of the eigenvalue matches the sign of the norm of the corresponding eigenvector. An identical statement can be made about the zero-temperature QRPA~\cite{Ripka1986}, whose stability matrix is $\mathcal{S}$ and eigenvalue problem is for $M \mathcal{S}$.

With these considerations, we argue that the formulation that uses the temperature-dependent metric is the most natural version of the FT-QRPA. The other formulations require some careful treatment of the temperature dependence in the eigenvectors, which can differ between left and right eigenvalue problems for non-Hermitian $\tsp$. On the other hand, $\widetilde{\mathcal{S}}_M$ is both Hermitian and coincides with the stability matrix, and $\widetilde{M}^{-1} \widetilde{\mathcal{S}}_M$ has temperature-independent eigenvectors. Therefore the properties of FT-QRPA eigenvalue problem in this formulation directly mirror those of the zero-temperature problem, so long as we work in the expanded $4\times4$ supermatrix space and use the temperature-dependent metric $\widetilde{M}$.

\subsection{FT-QRPA strength function}

In this section we discuss the FT-QRPA transition strength function.
With the spectral decomposition of the FT-QRPA matrix, we can formally invert $[\widetilde{\mathcal{S}} - \omega M]$ in Eq.~\eqref{eq:ft_linear_response} to express the response function (i.e., the Greens function), $\widetilde{G}(\omega)$, in terms of FT-QRPA eigen-solutions. The response function is defined as the connection between the external field and the density response, $\delta \widetilde{R}(\omega) = \widetilde{G}(\omega) \mathcal{F}$. In the case of Eq.~\eqref{eq:ft_linear_response}, it is,
\begin{equation}\label{eq:ft_greens}
    \widetilde{G}(\omega) = - T {\mathcal{X}} (\widetilde{\mathcal{O}} - \omega)^{-1}  M' {\mathcal{X}}^{\dagger} T
    \,,
\end{equation}
where $T{\mathcal{X}}$ is a matrix whose columns are right eigenvectors of $M\widetilde{\mathcal{S}}$, and $\mathcal{O}=\text{diag}(\Omega_{k-}, \Omega_{k+}, -\Omega_{k+}, -\Omega_{k-})$ is a matrix of the corresponding eigenvalues~\cite{Sommermann1983}. To write Eq.~\eqref{eq:ft_greens} we used the orthogonality relation (cf.~Appendix~\ref{sec:appa} Property~\ref{prop:orthogonal}) to express the inverse of $T{\mathcal{X}}$ in terms of its adjoint, $(T{\mathcal{X}})^{-1} = M' {\mathcal{X}}^\dagger M$, where $M'$ has the same form as $M$ but has the dimension of the eigenvalue problem.

The eigenvalues $\Omega_{k+}$ are equal to $E_\alpha + E_\beta$ in the FT-HFB limit and represent the usual transitions from the ground state to excited states. $\Omega_{k-}$ are new at finite temperature. They equal $E_\alpha - E_\beta$ in the FT-HFB limit, and come from excitations among thermally populated excited states. If the stability condition in Eq.~\eqref{eq:ft_stability} is met, then $\Omega_{k\pm} > 0$ and the $\Omega_{k-}$ fall in-between the $\Omega_{k+}$.

Now, the transition strength function contains squared transition amplitudes. The zero-temperature strength is typically expressed as $\lvert \bra{k} \hat{F} \ket{0} \rvert^2$, where $\ket{k}$ is a QRPA phonon, $\ket{0}$ is the QRPA ground state, and $\hat{F}$ is an external field. For an analogous expression at finite temperature, we examine the FT-QRPA equations of motion~\cite{Sommermann1983},
\begin{equation}
    \left\langle \left[\delta \Gamma^k, \left[H,\Gamma^{k \dagger}\right]\right] \right\rangle = \Omega_k \left\langle \left[ \delta \Gamma^k, \Gamma^{k \dagger} \right] \right\rangle
\end{equation}
where 
\begin{equation}\label{eq:ft_phonon_operator}
    \Gamma^{k \dagger} = \sum\limits_{\beta < \alpha} P^k_{\alpha \beta} a^\dagger_\alpha a_\beta + X^k_{\alpha \beta} a^\dagger_\alpha a^\dagger_\beta - Y^{k *}_{\alpha \beta} a_\beta a_\alpha - Q^{k *}_{\alpha \beta} a^\dagger_\beta a_\alpha
\end{equation}
is an operator that creates a phonon in an ensemble where quasiparticle $\alpha$ has occupation $f_\alpha$. The ensemble average $\langle \rangle$ means to trace with the FT-HFB statistical density operator,
\begin{equation}\label{eq:statistical_density}
    \langle \mathcal{O} \rangle = \text{Tr}[\mathcal{O} D]
\end{equation}
and $D$ takes the well known form~\cite{Goodman1981},
\begin{equation}\label{eq:indep_part_density}
    D = \frac{e^{-\beta H_{\text{HFB}}}}{\text{Tr}[e^{-\beta H_{\text{HFB}}}]}
    =
    \frac{\prod_\alpha \left[ f_\alpha a^\dagger_\alpha a_\alpha 
    + (1- f_\alpha) a_\alpha a^\dagger_\alpha \right]}{\prod_\alpha \left[ 1 + e^{-\beta E_\alpha} \right]}
    \,.
\end{equation}
From the equations of motion, the FT-QRPA amplitudes can be written~\cite{Sommermann1983}
\begin{equation}\label{eq:FTamplitudes}
\begin{aligned}
    \left \langle \left[ a^\dagger_\beta a_\alpha ,\Gamma^{k\dagger} \right] \right \rangle
    &= (f_\beta - f_\alpha) P^k_{\alpha \beta} \\
    \left \langle \left[ a_\beta a_\alpha ,\Gamma^{k\dagger} \right] \right \rangle
    &= (1 - f_\alpha - f_\beta) X^k_{\alpha \beta} \\
    \left \langle \left[ a^\dagger_\alpha a^\dagger_\beta, \Gamma^{k\dagger} \right] \right \rangle
    &= (1 - f_\alpha - f_\beta) Y^k_{\alpha \beta} \\
    \left \langle \left[ a^\dagger_\alpha a_\beta ,\Gamma^{k\dagger} \right] \right \rangle
    &= (f_\beta - f_\alpha) Q^k_{\alpha \beta}
    \,.
\end{aligned}
\end{equation}
and we can express the finite-temperature analogue of the zero-temperature transition matrix element $\bra{0} \hat{F} \ket{k}$ as,
\begin{equation}\label{eq:F_ensemble_avg}
\begin{aligned}
    &\left\langle \left[ \hat{F}, \hat{\Gamma}^{k \dagger} \right] \right\rangle 
    =
    \mathcal{F}^\dagger \delta \widetilde{R}^k
    \\&=
    \sum\limits_{\beta < \alpha} 
        F^{{11}}_{\alpha\beta} (f_\beta-f_\alpha)P^{k}_{\alpha\beta}  
      + F^{02}_{\alpha\beta} (1-f_\beta-f_\alpha)X^{k}_{\alpha\beta}   
        \\&\qquad
      + F^{20}_{\alpha\beta} (1-f_\beta-f_\alpha)Y^{k}_{\alpha\beta}
      + F^{\bar{11}}_{\alpha\beta} (f_\beta-f_\alpha)Q^{k}_{\alpha\beta} 
        \,.
\end{aligned}
\end{equation}
As discussed in Appendix \ref{sec:appa}, this physical result can be obtained from any formulation by tracing the eigenvector with the appropriate factor of $T$.

From here it is evident that a function of squared transition amplitudes must be quadratic in $T$ and $\mathcal{F}$. For Eqs.~\eqref{eq:ft_linear_response} and~\eqref{eq:ft_greens} we need only trace the density response with the external field, which gives,
\begin{equation}\label{eq:ft_fam_strength}
\begin{aligned}
    \widetilde{S}_F(\omega) 
    &= 
    \mathcal{F}^\dagger \widetilde{G}(\omega) \mathcal{F}
    =    
    \mathcal{F}^\dagger \delta \widetilde{R}(\omega)
    \\&
    = 
    \sum\limits_{k \pm >0} \left[ 
    \frac{
    \big\lvert 
        \big\langle \big[ 
            \Gamma^{k}, \hat{F}
        \big] \big\rangle
    \big\rvert^2}
         {\omega - \Omega_k} -
    \frac{
    \big\lvert
        \big\langle \big[ 
            \Gamma^{k}, \hat{F}^\dagger
        \big] \big\rangle    
    \big\rvert^2}
        {\omega + \Omega_k} \right]
\end{aligned}
\end{equation}
where the sum is over FT-QRPA modes with positive eigenvalues. 

As for the $p$-dependent formulations, the Greens function is not unique. It becomes ${\widetilde{G}_p(\omega) = - T^p {\mathcal{X}} (\widetilde{\mathcal{O}} - \omega)^{-1}  M' {\mathcal{X}}^{\dagger} T}
$, and the strength function is thus obtained as $\widetilde{S}_F(\omega) = (T^{1-p} \mathcal{F})^\dagger \delta \widetilde{R}_p(\omega)$. Similarly, for the temperature-dependent metric formulation, the Greens function takes the form ${\widetilde{G}_M(\omega) = - {\mathcal{X}} (\widetilde{\mathcal{O}} - \omega)^{-1}  M' {\mathcal{X}}^{\dagger} T}
$, and the strength function is $\widetilde{S}_F(\omega) = (T \mathcal{F})^\dagger \delta \widetilde{R}_M(\omega)$. In all cases, we are able to arrive at the same strength function, Eq.~\eqref{eq:ft_fam_strength}.

\section{Transitions in the FT-QRPA}\label{sec:transitions}

\subsection{Exact transition strength}\label{ssec:exact_trans_str}
To elucidate the meaning of some of the quantities discussed in the previous section, we briefly review some expressions from the exact linear response of a finite-temperature ensemble.
The exact finite-temperature response function is~\cite{Vautherin1984a,Chomaz1990}
\begin{equation}
\begin{aligned}
    \widetilde{G}^e_{\mu\mu',\nu\nu'} = \frac{1}{Z} \sum\limits_{i,f} &e^{-\beta \omega_i} \Bigg[ \frac{\bra{i} a^\dagger_{\mu'} a_\mu \ket{f}\bra{f} a^\dagger_{\nu} a_{\nu'} \ket{i}} {\omega - (\omega_f- \omega_i)} 
    \\&
    -
    \frac{\bra{i} a^\dagger_{\nu} a_{\nu'} \ket{f}\bra{f} a^\dagger_{\mu'} a_{\mu} \ket{i}} {\omega + (\omega_f- \omega_i)}
    \Bigg]
\end{aligned}
\end{equation}
where $Z$ is the partition function, $\ket{i}$ and $\ket{f}$ are exact eigenstates of the Hamiltonian, and $\omega_i$ and $\omega_f$ are their eigenvalues. The exact finite-temperature strength function is then,
\begin{equation}\label{eq:s_exact_ft}
\begin{aligned}
    \widetilde{S}^e(\omega) &= \sum\limits_{\mu'<\mu,\nu'<\nu} \mathcal{F}^\dagger_{\mu\mu'} \widetilde{G}^e_{\mu\mu',\nu\nu'} \mathcal{F}_{\nu\nu'}
    \\ &=
    \frac{1}{Z}
        \sum\limits_{i,f} e^{-\beta \omega_i} \left[ 
    \frac{
    \big\lvert 
        \bra{f} \hat{F} \ket{i}
    \big\rvert^2}
         {\omega - (\omega_f - \omega_i)} -
    \frac{
    \big\lvert
        \bra{f} \hat{F}^\dagger \ket{i}
    \big\rvert^2}
        {\omega + (\omega_f - \omega_i)} \right]
        \,.
\end{aligned}
\end{equation}
On the real axis, the imaginary part of this function is related to the distribution,
\begin{equation}\label{eq:exact_ft_transition_str_fct}
\begin{aligned}
    \frac{dB}{d\omega}' 
    &=
    -\frac{1}{\pi} \Im[\widetilde{S}^e(\omega)] 
    \\
    &= 
    \frac{1}{Z}
    \sum\limits_{i,f} e^{-\beta \omega_i} \Big[ \lvert \bra{f}  \hat{F} \ket{i} \rvert^2 \delta (\omega - (\omega_f - \omega_i))
    \\&\qquad\qquad
    -\lvert \bra{f} \hat{F}^\dagger \ket{i} \rvert^2 \delta (\omega + (\omega_f - \omega_i)) \Big]
    \,,
\end{aligned}
\end{equation}
where all terms in Eq.~\eqref{eq:exact_ft_transition_str_fct} are defined at both positive and negative $\omega$. The prime emphasizes that Eq.~\eqref{eq:exact_ft_transition_str_fct} differs from the physical strength distribution $dB/d\omega$. At zero temperature, we need only to consider positive $\omega$, and the physical transition strength distribution is directly related to the imaginary part of the strength function at these energies. However, at finite temperature there is the possibility that a thermally populated excited state will transition to a state of lower energy. This is evident by the double sum over both $i$ and $f$, which implies that both excitations --- transitions from $\ket{i}$ to $\ket{f}$ --- as well de-excitations --- transitions from $\ket{f}$ to $\ket{i}$ --- are included. The de-excitation transitions have negative energies.

To make the distinction between excitations and de-excitations more obvious, we rewrite Eq.~\eqref{eq:exact_ft_transition_str_fct} with a sum over only unique pairs of states. We then have terms for excitations induced by the external field $\hat{F}$ (denoted with a superscript $+$),
\begin{equation}\label{eq:separate_strengths_exc}
\begin{aligned}
    \frac{dB}{d\omega}^+(\hat{F}) &= 
    \frac{1}{Z} \sum\limits_{i<f} e^{-\beta \omega_i} \lvert \bra{f}  \hat{F} \ket{i} \rvert^2 \delta (\omega - (\omega_f - \omega_i))   
    \\
    \frac{dB}{d\omega}^+(\hat{F}^\dagger) &= 
    \frac{1}{Z} \sum\limits_{i<f}e^{-\beta \omega_i} \lvert \bra{f} \hat{F}^\dagger \ket{i} \rvert^2 \delta (\omega + (\omega_f - \omega_i)) 
    \,,
\end{aligned}
\end{equation}
and terms for de-excitations (denoted with a superscript $-$),
\begin{equation}\label{eq:separate_strengths_deexc}
\begin{aligned}
    \frac{dB}{d\omega}^-(\hat{F}) &= 
    \frac{1}{Z} \sum\limits_{i<f} e^{-\beta \omega_f} \lvert \bra{i}  \hat{F}\ket{f} \rvert^2 \delta (\omega - (\omega_i - \omega_f))  
    \\
    \frac{dB}{d\omega}^-(\hat{F}^\dagger) &= 
    \frac{1}{Z} \sum\limits_{i<f}e^{-\beta \omega_f} \lvert \bra{i} \hat{F}^\dagger \ket{f} \rvert^2 \delta (\omega + (\omega_i - \omega_f)) 
    \,.
\end{aligned}
\end{equation}
The full Eq.~\eqref{eq:exact_ft_transition_str_fct} can then be re-written as,
\begin{equation}\label{eq:strength_de-ex}
    \frac{dB}{d\omega}' = \left( \frac{dB}{d\omega}^+(\hat{F}) +  \frac{dB}{d\omega}^-(\hat{F}) \right) 
    - \left(\frac{dB}{d\omega}^+(\hat{F}^\dagger) +  \frac{dB}{d\omega}^-(\hat{F}^\dagger) \right)
    \,.
\end{equation}

Let us now consider a single energy $\omega_{fi} = \omega_f - \omega_i$. The transition strength evaluated at this energy reads,
\begin{equation}\label{eq:separate_strengths}
\begin{aligned}
    \frac{dB}{d\omega}' \Bigg\rvert_{\omega = \omega_{fi}} =
    \left(
    \frac{dB}{d\omega}^+(\hat{F})  
    - \frac{dB}{d\omega}^-(\hat{F}^\dagger)\right) \Bigg\rvert_{\omega = \omega_{fi}}
    \,.
\end{aligned}
\end{equation}
This demonstrates that Eq.~\eqref{eq:exact_ft_transition_str_fct} on its own is not exactly equal to the \textit{physical} strength distribution. At a given energy, the de-excitation strength for the reverse process governed by $\hat{F}^\dagger$ interferes with the excitation strength for the forward process, $\hat{F}$, and vice versa.
We can, however, eliminate the reverse process contributions very easily. Since $\lvert \bra{f} \hat{F} \ket{i} \rvert^2 = \lvert \bra{i} \hat{F}^\dagger \ket{f} \rvert^2$, and through the delta function $\omega = \omega_{fi}$, we can show that
\begin{equation}\label{eq:relate_F_and_R_procs}
\begin{aligned}
    e^{-\beta \omega}
    \frac{dB}{d\omega}^+(\hat{F}) \Bigg\rvert_{\omega = \omega_{fi}} 
    &=
    \frac{dB}{d\omega}^-(\hat{F}^\dagger)  \Bigg\rvert_{\omega = \omega_{fi}}
    \,,
\end{aligned}
\end{equation}
where $e^{-\beta \omega} = e^{-\beta \omega_f}/e^{-\beta \omega_i}$ is simply the ratio of the ensemble weights for the states $\ket{i}$ and $\ket{f}$ involved in the transition.
Such a relation is often assumed on the basis of detailed balance~\cite{Egido1993,Sommermann1983,PhysRevC.92.045804}, but for Eq.~\eqref{eq:exact_ft_transition_str_fct} it holds even without this assumption. The expression here and from detailed balance have identical forms because, in the case of the grand canonical ensemble, the ensemble weights are Boltzmann factors.

Thus, for a given $\omega$, the second term in Eq.~\eqref{eq:s_exact_ft} contributes the same as the first term, up to the factor $e^{-\beta \omega}$. We can therefore express the strength function as,
\begin{equation}\label{eq:s_exact_detailed_bal}
    \widetilde{S}^e(\omega) =
    \frac{1}{Z}
        \sum\limits_{i,f} e^{-\beta \omega_i} 
    \frac{
    \big\lvert 
        \bra{f} \hat{F} \ket{i}
    \big\rvert^2}
         {\omega - (\omega_f - \omega_i)} 
     \left[ 1 - e^{-\beta\omega} \right]
        \,.
\end{equation}
It is now clear that the physical strength distribution for the forward process (which still contains both excitations and de-excitations), is obtained from $\widetilde{S}^e(\omega)$ through
\begin{equation}\label{eq:ft_prefactor}
\begin{aligned}
    \frac{dB}{d\omega} &= -\frac{1}{\pi} \Im\left[\frac{\widetilde{S}^e(\omega)}{1 - e^{-\beta \omega}}\right]
    \\
    &=
    \frac{1}{Z} \sum\limits_{i,f} e^{-\beta \omega_i} \big\lvert \bra{f} \hat{F} \ket{i} \big\rvert^2 \delta(\omega - (\omega_f - \omega_i))
    \,.
\end{aligned}
\end{equation}

To summarize, we have shown that some of the physical strength comes from de-excitations, which are located at $\omega < 0$, in addition to excitations at $\omega > 0$. The imaginary part of the response function is related to a distribution that contains contributions from the forward and reverse processes. At zero temperature, these contributions are well separated, but at finite temperature they interfere. However, the interfering contributions are related by their ensemble weights, which can be used to eliminate the unwanted strength. This is the source of the thermal prefactor, $(1 - e^{-\beta \omega})^{-1}$, which must be included to obtain the physical strength distribution, as in Eq.~\eqref{eq:ft_prefactor}.

\subsection{Comparison to FT-QRPA}\label{ssec:comparison_ftqrpa}

We now wish to leverage our understanding of the exact transition strength distribution to identify where the various contributions appear in the FT-QRPA. From Eq.~\eqref{eq:ft_fam_strength} we see that the exact distribution in Eq.~\eqref{eq:exact_ft_transition_str_fct} is approximated in the FT-QRPA by 
\begin{equation}\label{eq:qrpa_ft_strength}
    \frac{dB}{d\omega}' \approx
    \sum_{k\pm>0}
    \big\lvert 
        \big\langle \big[ 
            \Gamma^{k}, \hat{F}
        \big] \big\rangle
    \big\rvert^2 \delta(\omega - \Omega_k )
    -
    \big\lvert 
        \big\langle \big[ 
            \Gamma^{k}, \hat{F}^\dagger
        \big] \big\rangle
    \big\rvert^2 \delta(\omega + \Omega_k )
 \,.
\end{equation}
Using Eq.~\eqref{eq:strength_de-ex} to relate FT-QRPA expressions to the exact expression at the same energies, we have
\begin{equation}\label{eq:QRPA_str_numerators}
\begin{aligned}
    \frac{dB}{d\omega}^+(\hat{F}) &- \frac{dB}{d\omega}^-(\hat{F}^\dagger)
    \\
    &\approx
    \sum\limits_{k\pm>0}
    \big\lvert 
        \big\langle \big[ 
            \Gamma^{k}, \hat{F}\phantom{^{\dagger}}
        \big] \big\rangle
    \big\rvert^2 
    \delta(\omega-\Omega_k),
    \quad
    \omega > 0
    \\
    \frac{dB}{d\omega}^+(\hat{F}^\dagger) &- \frac{dB}{d\omega}^-(\hat{F})
    \\
    &\approx
    \sum\limits_{k\pm>0}
    \big\lvert 
        \big\langle \big[ 
            \Gamma^{n}, \hat{F}^\dagger
        \big] \big\rangle
    \big\rvert^2
    \delta(\omega+\Omega_k),
    \quad
    \omega < 0
    \,.
\end{aligned}
\end{equation}
Coupled with Eq.~\eqref{eq:relate_F_and_R_procs}, we can now make the following claims about what the FT-QRPA quantities approximate:
\begin{widetext}
\begin{equation}\label{eq:FTQRPA_strength_meaning}
\begin{aligned}
        \frac{
         \big\lvert \big\langle \big[ 
            \Gamma^{k}, \hat{F}^{\phantom{\dagger}}
        \big] \big\rangle \big\rvert^2
        }{1 - e^{-\beta \omega}}
        &\approx
        \frac{1}{Z} \sum\limits_{i<f} e^{-\beta \omega_i}
        \lvert \bra{f} \hat{F} \ket{i} \rvert^2 \quad \forall \quad (\omega_f-\omega_i) \approx \Omega_k > 0
    \\
        -\frac{
        \big\lvert \big\langle \big[ 
            \Gamma^{k}, \hat{F}^\dagger
        \big] \big\rangle \big\rvert^2
        }{1 - e^{-\beta \omega}}
        &\approx
        \frac{1}{Z} \sum\limits_{i<f} e^{-\beta \omega_f}
        \lvert \bra{i} \hat{F} \ket{f} \rvert^2 \quad \forall \quad (\omega_i-\omega_f) \approx - \Omega_k < 0
        \,.
\end{aligned}
\end{equation}
\end{widetext}
Clearly, in the FT-QRPA we still need to include strength due to de-excitations at $\omega < 0$. We also need the thermal prefactor to eliminate interference due to the reverse process. We should therefore continue to use Eq.~\eqref{eq:ft_prefactor} to compute the physical strength distribution from the FT-QRPA strength function.

The relations in Eq.~\eqref{eq:FTQRPA_strength_meaning} are somewhat unusual because some of the physical strength comes from \textit{both} terms in the strength function. In contrast, at zero temperature only the term at $\omega > 0$ is used, and the one at $\omega < 0$ is considered unphysical and neglected. Further adding to the confusion, at finite temperature the de-excitations are located at $\omega < 0$, but all stable FT-QRPA eigenvalues are positive. To clarify these issues, we examine the energy-weighted sum rule.

The exact sum rule is related to the first moment of the physical strength distribution,
\begin{equation}
    \Sigma^1 = \int_{-\infty}^{+\infty} d\omega\ \omega \frac{dB}{d\omega}
    = \frac{1}{Z} \sum\limits_{i,f} e^{-\beta \omega_i} \lvert \bra{f} \hat{F} \ket{i} \rvert^2 (\omega_f - \omega_i)
    \,.
\end{equation}
This can be expressed as a double commutator, which reads,
\begin{equation}
    \Sigma^1 = \frac{1}{2} \left\langle \left[\hat{F}^\dagger, \left[H, \hat{F}\right]\right] \right\rangle
    \,.
\end{equation}
In the FT-QRPA we can evaluate the double commutator using a generalization of Thouless' theorem~\cite{Thouless1961, Sommermann1983, Ring2004},
\begin{equation}
\begin{aligned}
    \left\langle \left[\hat{F}^\dagger, \left[H, \hat{F}\right]\right] \right\rangle 
    &=
     \sum\limits_{k\pm>0}
    \Omega_{k} \left(
        \left\lvert\langle [\Gamma^k, \hat{F}] \rangle\right\rvert^2
       +\left\lvert\langle [\Gamma^k, \hat{F}^\dagger] \rangle\right\rvert^2\right)
       \,.
\end{aligned}
\end{equation}
Using the relations in Eq.~\eqref{eq:QRPA_str_numerators}, with a little algebra we can show that the FT-QRPA sum rule approximates the exact relation,
\begin{equation}
\begin{aligned}
    &\sum\limits_{k\pm>0} \Omega_{k} \left(
        \left\lvert\langle [\Gamma^k, \hat{F}] \rangle\right\rvert^2
       +\left\lvert\langle [\Gamma^k, \hat{F}^\dagger] \rangle\right\rvert^2\right)
    \\
    &\approx
    \Big(\frac{dB}{d\omega}^+(\hat{F})(\omega_{f}-\omega_{i}) +\frac{dB}{d\omega}^-(\hat{F})(\omega_{i}-\omega_{f})\Big)
    +
    h.c.
\end{aligned}
\end{equation}
Note that the $\omega_f -\omega_i$ terms fall under the sums in Eqs.~\eqref{eq:separate_strengths_exc} and~\eqref{eq:separate_strengths_deexc}. Through this exercise we can see how even though we only sum over $\Omega_{k\pm}>0$, because the ensemble averages contain an interference between forward and reverse processes, we still get the correct contributions from de-excitations at $\omega < 0$. They originate from the term proportional to $\hat{F}^\dagger$ in the FT-QRPA strength function, Eq.~\eqref{eq:ft_fam_strength}.

\section{Gamow-Teller strength and sum rule}\label{sec:bgt}

In this section we apply the formalism outlined in this work to the calculation of the charge-exchange Gamow-Teller (GT) strength functions. The GT transitions correspond to coupling the total angular momentum and parity to $J^\pi = 1^+$ with the total isospin $T = 0$ and spin $S = 1$. The external field operator takes the well known form $\boldsymbol{\sigma} \tau_\pm$ representing GT${}^\pm$ excitations, where $\boldsymbol{\sigma}$ is the Pauli spin matrix and $\tau_\pm$ the isospin raising (lowering) operator. The GT${}^+$ transition corresponds to the change of isospin projecton $\Delta T_z = +1$ and describes transitions between proton to neutron states $p \to n$, while the opposite is true for GT${}^-$. It is important to note that GT transitions connect nuclei with different charges and therefore different ground states. We demonstrate that the Ikeda sum rule \cite{IKEDA1963271} is satisfied within the present formalism and perform calculations of the GT strength in ${}^{58}$Ni employing both the relativistic matrix FT-PNQRPA and non-relativistic finite-temperature charge-changing finite amplitude method (FT-PNFAM).

\subsection{Ikeda sum rule with FT-QRPA}

The zeroth moment of the physical strength distribution is given by
\begin{equation}
\Sigma^0 = \int \limits_{-\infty}^\infty d \omega \frac{d B}{d \omega} = \frac{1}{Z} \sum \limits_{if} e^{-\beta \omega_i} |\langle f | \hat{F}| i \rangle |^2.
\end{equation}
In order to derive the Ikeda sum rule, the external field operator assumes the GT form $\hat{F} = \boldsymbol{\sigma} \tau_-$.
The Ikeda sum rule is then defined by the difference~\cite{IKEDA1963271}
\begin{align}
\begin{split}
&\Sigma^0(\hat{F}) - \Sigma^0(\hat{F}^\dag) 
= \langle [ \hat{F}^\dag, \hat{F} ] \rangle,
\end{split}
\end{align}
where we have used the definition of the finite-temperature density operator $\hat{D} = \frac{1}{Z} \sum \limits_i e^{-\beta \omega_i} | i \rangle \langle i |$ and definition of the thermal average (cf.~Eq.~\eqref{eq:statistical_density}). At this point we approximate the thermal average using the statistical density operator of independent quasiparticles in Eq.~\eqref{eq:indep_part_density}. The external field operator in the proton-neutron quasiparticle basis takes the form
\begin{equation}\label{eq:ftqrpa_field_op}
\hat{F} = \sum \limits_{\pi \nu} F_{\pi \nu}^{11} a_\pi^\dag a_\nu + F^{20}_{\pi \nu} a_\pi^\dag a_\nu^\dag + F^{02}_{\pi \nu} a_\pi a_\nu + F^{\bar{11}}_{\pi \nu} a_\pi a_\nu^\dag,
\end{equation}
where $\pi$ ($\nu$) denote proton and neutron quasiparticles, respectively. We can evaluate the ensemble averages $\langle \hat{F} \hat{F}^\dag \rangle$ and $\langle \hat{F}^\dag \hat{F} \rangle$ using the expressions $\langle a_\alpha^\dag a_{\beta} \rangle = f_\alpha \delta_{\alpha \beta}$ and ${\langle a_\alpha a_{\beta}^\dag \rangle = (1-f_\alpha) \delta_{\alpha \beta}}$~\cite{Sommermann1983}. Finally, for the commutator we have
\begin{align}\label{eq:thermal_avg_of_comm}
\begin{split}
&\langle [ \hat{F}^\dag, \hat{F} ] \rangle = F^{11}_{\pi \nu} (F^\dag)^{11}_{\pi \nu}(f_\nu - f_\pi) + (1-f_\nu - f_\pi) F^{20}_{\pi \nu} (F^\dag)^{20}_{\pi \nu} \\
&- (1-f_\pi -f_\nu) F^{02}_{\pi \nu} (F^\dag)^{02 }_{\pi \nu} - F^{\bar{11}}_{\pi \nu} (F^\dag)^{\bar{11} }_{\pi \nu} (f_\nu - f_\pi).
\end{split}
\end{align}
The above expression can be evaluated either using the FT-HFB or the FT-HFBCS approximations (neglecting higher-order correlations), which yields the well-known result
\begin{equation}
\langle [ \hat{F}^\dag, \hat{F} ] \rangle = 3(N-Z),
\end{equation}
where $N$ ($Z$) denotes neutron (proton) number. The derivation within the FT-HFB theory is given in Appendix \ref{sec:appb}. 

On the other hand, we can start the derivation from the physical strength distribution as approximated by the FT-QRPA response in Eq.~\eqref{eq:qrpa_ft_strength}. We evaluate the difference between zeroth moments as
\begin{align}\label{eq:ft_qrpa_ikeda_sum_rule}
\begin{split}
&\Sigma^0(\hat{F}) - \Sigma^0(\hat{F}^\dag) = \int \limits_{-\infty}^\infty d \omega \left[ \frac{dB}{d \omega}(\hat{F}) - \frac{d B}{d \omega} (\hat{F}^\dag) \right] \\
&= \sum \limits_{k\pm>0} \left( |\langle [ \Gamma^k, \hat{F}] \rangle|^2 - |\langle [ \Gamma^k, \hat{F}^\dag ] \rangle|^2 \right).
\end{split}
\end{align}
It can be shown that the above expression reduces to the thermal average of the commutator as in Eq.~\eqref{eq:thermal_avg_of_comm} (see Appendix \ref{sec:appc} for details). Therefore, we have demonstrated that the Ikeda sum rule is satisfied within the present formalism. In particular, we have shown that when the thermal prefactor is included, as motivated in Sections~\ref{ssec:exact_trans_str} and~\ref{ssec:comparison_ftqrpa}, the Ikeda sum rule is satisfied only when the strength due to de-excitations (at $\omega < 0$) is also included in the sum.

\begin{figure*}
\centering
\includegraphics[width = \linewidth]{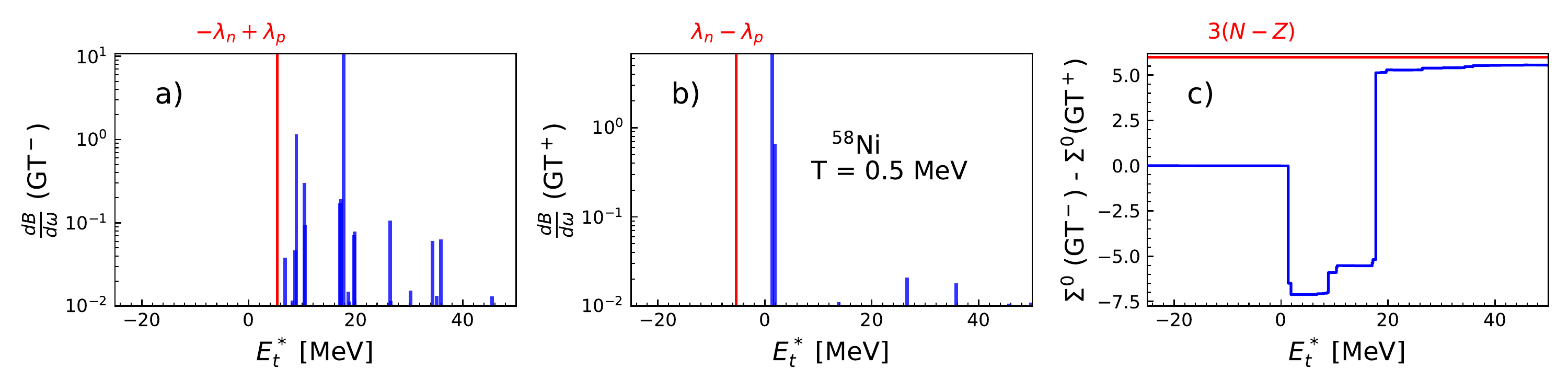}
\\
\includegraphics[width = \linewidth]{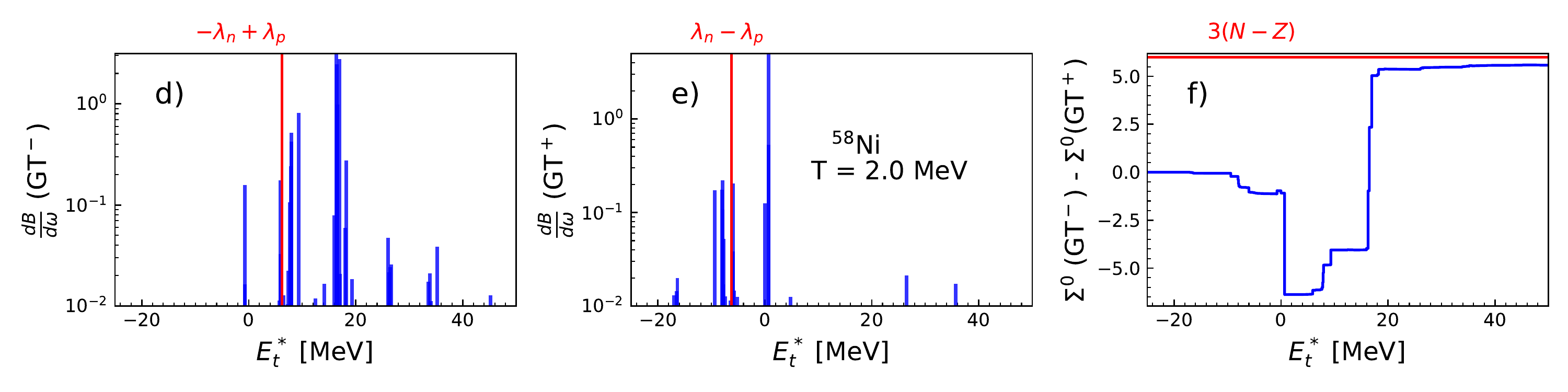}
\caption{The temperature evolution of the Gamow-Teller strength distribution $dB/d\omega$ in ${}^{58}$Ni with respect to the excitation energy $E^*_t$ of the parent nucleus in the (a),(d) GT${}^-$ and (b),(e) GT${}^+$ directions. Results are shown for temperatures (a)--(c) $T = 0.5$~MeV  and (d)--(f) $T = 2.0$~MeV. The threshold between positive and negative energy transitions ($E^*_t = \pm( \lambda_n-\lambda_p)$ for GT${}^\pm$) is plotted with the red vertical line. (c),(f) the difference between zeroth moments $\Sigma^0(\text{GT}^-) - \Sigma^0(\text{GT}^+)$, with the red horizontal line denoting the result of the Ikeda sum rule $3(N-Z)$. Calculations are performed with the FT-HBCS+FT-PNRQRPA model using the D3C${}^*$ interaction.}\label{fig:gt_strength}
\end{figure*}

\subsection{Sum rule with the relativistic FT-PNQRPA}
\label{ssec:rqrpa_sumrule}

The relativistic FT-PNQRPA (FT-PNRQRPA) calculation is based on the relativistic mean field theory where pairing correlations are treated within the finite-temperature Hartree Bardeen-Cooper-Schrieffer (FT-HBCS) theory with a monopole pairing interaction~\cite{PhysRevC.101.044305,PhysRevC.102.065804}. For the mean field part of the Hamiltonian we employ the D3C${}^*$ relativistic EDF~\cite{PhysRevC.75.024304} and assume spherical symmetry. The FT-PNRQRPA eigenvalue problem is derived from Eq.~\eqref{eq:ft_linear_response_m} by expanding the density response $\delta \tilde{\mathcal{R}}$ in the configuration space of the (quasi)proton-(quasi)neutron basis. For a detailed description of the method, check Refs.~\cite{Sommermann1983, Yuksel2017, PhysRevC.101.044305,PhysRevC.104.054318}. From the FT-PNRQRPA eigensolutions we compute the GT transition strengths according to Eq.~\eqref{eq:F_ensemble_avg} and construct the physical strength distribution from them by including the thermal prefactor,
\begin{align}\label{eq:ftpnrqrpa_strength}
\begin{split}
\frac{d B}{d \omega}(\mathrm{GT}^\pm) &= \frac{1}{1 - e^{-\beta \omega}} \sum \limits_{k \pm > 0} \left( |\langle [ \Gamma^k, \vec{\sigma}\tau_\pm ] \rangle|^2 \delta (\omega- \Omega_k)  \right. \\
&-  \left.|\langle [ \Gamma^k, \vec{\sigma}\tau_\mp ] \rangle|^2 \delta (\omega+ \Omega_k)   \right).
\end{split}
\end{align}

To study the temperature evolution of the GT strength function and for the numerical check of the Ikeda sum rule, we select ${}^{58}$Ni and perform calculations of the GT${}^\pm$ strength functions at temperatures  $T = 0.5$ and 2~MeV. For this calculation, the same numerical cut-offs are used as in Ref.~\cite{PhysRevC.105.055801}. Namely, the nuclear ground state at finite temperature is obtained by solving the FT-HBCS equations in the spherical harmonic oscillator basis with $N_{\mathrm{osc}} = 20$ oscillator shells. At the FT-PNRQRPA level, the maximal two-quasiparticle excitation energy is cut-off at 100~MeV, i.e., ${E_\pi + E_\nu < 100~\mathrm{MeV}}$, and threshold for the product of FT-HBCS occupation amplitudes is set to $u_{(\pi,\nu)} v_{(\nu,\pi)} > 0.01$. The monopole pairing strength for neutrons in ${}^{58}$Ni is $G_n = 24.3$~MeV/A, adjusted to reproduce the pairing gap calculated using the five-point formula~\cite{bender2000pairing}. Note that due to the shell closure at $Z = 20$, no pairing occurs for proton states. Results are displayed in Fig.~\ref{fig:gt_strength}\hyperref[fig:gt_strength]{(a)--(c)}, for $T = 0.5$ MeV and Fig.~\ref{fig:gt_strength}\hyperref[fig:gt_strength]{(d)--(f)} for $T = 2.0$ MeV. We plot the GT$^\pm$ strength distributions as functions of the excitation energy with respect to the parent (i.e., the target) nucleus, $E^*_t = \omega \pm (\lambda_n - \lambda_p)$
\footnote{Strength functions are also viewed as a function of the daughter excitation energy $E^*_d$. The energies are related by $E_t^* = E_d^* + \mathrm{BE}_d - \mathrm{BE}_t$, where BE$<0$ are binding energies of the daughter and target, respectively.}.
The GT${}^-$ strength appears in Fig.~\ref{fig:gt_strength}\hyperref[fig:gt_strength]{(a),(d)}, the GT${}^+$ strength in Fig.~\ref{fig:gt_strength}\hyperref[fig:gt_strength]{(b),(e)}, and the difference between zeroth moments $\Sigma^0(\text{GT}^-) - \Sigma^0(\text{GT}^+)$ in Fig.~\ref{fig:gt_strength}\hyperref[fig:gt_strength]{(c),(f)}.

The main effect of finite temperature is the appearance of GT strength below the threshold, which is defined by $E^*_t = \pm (\lambda_{n} - \lambda_{p})$ for GT${}^\pm$ transitions.
In the following, we denote such strength as de-excitations or simply, negative energy transitions. At $T = 0.5$~MeV there is almost no contribution of the de-excitations. Overall, the strength function above threshold displays only moderate changes related to the reduction of pairing with increasing temperature. Since the FT-HBCS model is formulated within a grand canonical ensemble, a sharp vanishing of pairing gaps occurs at the critical temperature $T_c$ \cite{Goodman1981}. 
For ${}^{58}$Ni pairing properties vanish at $T_c = 0.74$ MeV. However, strength below the threshold increases with increasing temperature. This is a consequence of the thermal prefactor in Eq.~\eqref{eq:ftpnrqrpa_strength} which allows for a larger number of transitions with excitation energies below the threshold to contribute to the strength function. At $T = 2.0$~MeV the appearance of negative energy strength is clearly seen for both GT${}^+$ and GT${}^-$ transitions. Above the pairing collapse temperature, apart from inducing more transition strength below the threshold, temperature also modifies the strength function above the threshold due to the thermal unblocking, i.e., altering the occupation factors of quasiparticle levels. This effect allows for previously blocked transitions between fully occupied levels to occur.

In Fig.~\ref{fig:gt_strength}\hyperref[fig:gt_strength]{(c),(f)}, the difference between zeroth moments $\Sigma^0(\text{GT}^-) - \Sigma^0(\text{GT}^+) $ reproduces the Ikeda sum rule (red solid line) up to 93\%. It is well known that for zero-temperature PNRQRPA based on relativistic interactions it is necessary
to include the antiparticle-hole contribution to reproduce the sum rules~\cite{PhysRevC.67.034312,PhysRevC.69.054303}. In the present work we omit the antiparticle transitions for simplicity, and the small discrepancy in the sum rule is attributed to these missing contributions which, to a good approximation, can be neglected in charge-exchange calculations. 

The above numerical example demonstrates the need for a consistent treatment of the FT-PN(R)QRPA strength function. To satisfy the Ikeda sum rule using the correct interpretation of the physical strength distribution, which includes the thermal prefactor, it is necessary to include the negative energy transitions.

\subsection{Sum rule with the finite amplitude method}

To complement the calculations in Section~\ref{ssec:rqrpa_sumrule}, we now demonstrate the Ikeda sum rule for ${}^{58}$Ni using the non-relativistic FT-PNQRPA and the charge-changing finite amplitude method (PNFAM)~\cite{Mustonen2014}. The FAM is an efficient means to solve the linear response equations while avoiding the expensive construction of the residual interaction matrix. It accomplishes this by computing the perturbation of the Hamiltonian directly with a finite difference,
\begin{equation}
\begin{aligned}
    \delta \widetilde{H}(\omega) &= \frac{\partial H}{\partial R} \Bigg\rvert_{R=\widetilde{R}_0} \delta \widetilde{R}(\omega) 
    \\&
    =
    \lim_{\eta \rightarrow 0} \frac{1}{\eta} \left[H[ \widetilde{R}_0 + \eta \delta \widetilde{R}(\omega)] - H[ \widetilde{R}_0] \right]
    \\&
    =
    (
    \delta \widetilde{H}^{11}, 
    \delta \widetilde{H}^{20}, 
    \delta \widetilde{H}^{02}, 
    \delta \widetilde{H}^{\bar{11}}
    )
    \,,
\end{aligned}
\end{equation}
where $\widetilde{R}_0$ is the FT-HFB solution for the generalized density. In terms of the Hamiltonian perturbation, we can rearrange Eq.~\eqref{eq:ft_linear_response} to obtain the FT-FAM equations,
\begin{equation}\label{eq:ft_FAM}
    \left[\mathcal{E} - \omega M\right] \delta \widetilde{R}(\omega) = T \left[\delta \widetilde{H}(\omega) + \mathcal{F}(\omega)\right]
    \,,
\end{equation}
which can be solved for the density response by iteration.
From the FAM response we can then obtain the transition strength function through Eq.~\eqref{eq:ft_fam_strength}. In the charge-changing case, the perturbed Hamiltonian for Skryme functionals without proton-neutron mixing can be evaluated directly with the perturbed density, i.e., $\delta \widetilde{H} = H[\delta \widetilde{R}]$. 

The Ikeda sum rule is computed from the residues of Eq.~\eqref{eq:ft_fam_strength} for the Gamow-Teller external field operator. We obtain a sum of the residues via complex contour integration of the physical strength distribution,
\begin{equation}\label{eq:cci}
    \Sigma^0(\text{GT}^\pm) = \frac{1}{2\pi i} \oint_{C} d \omega\ \frac{\widetilde{S}_{\mathrm{GT}^\pm}(\omega)}{1 - e^{-\beta \omega}}
    \,,
\end{equation}
where we take the contour $C$ to be a circle centered on the real axis.

The thermal prefactor appearing in Eq.~\eqref{eq:cci}, however, is difficult to treat with the complex contour integration method. We need to include strength from excitations at $\omega>0$ and de-excitations at $\omega<0$, but the thermal prefactor contains poles along the imaginary axis at $\omega = n(2 \pi/\beta) i$ for $n=0,1,2,\ldots$ that should not be included in the contour integration. Naively, we might try to avoid these poles by using two contours. We can place one on either side of the imaginary axis and adjust the bounds so they come close to, but do not touch, the imaginary axis itself. In practice, however, if the boundary of a contour passes very near to one of these poles, the contour integration suffers numerical instabilities unless an unfeasibly dense discretization is used in the region near the pole.

\begin{figure}[t]
\caption{\label{fig:contours} Schematic representations of (a) contours and pole at $\omega=0$ for the fractional residue proof (see main text), (b) all contours required for the finite-temperature sum rule, and (c) deforming the contour to avoid poles on the imaginary axis. Poles on the real and imaginary axes are black markers, with circles representing poles from the thermal prefactor and crosses poles from the FT-QRPA strength function.}
\centering
\includegraphics[width=.45\columnwidth]{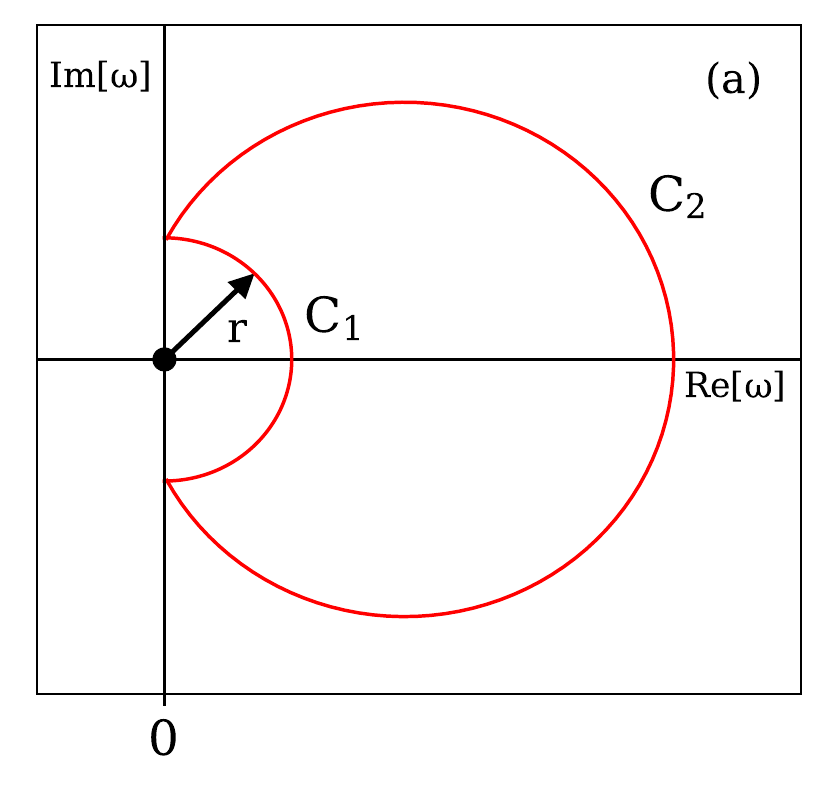}
\includegraphics[width=.45\columnwidth]{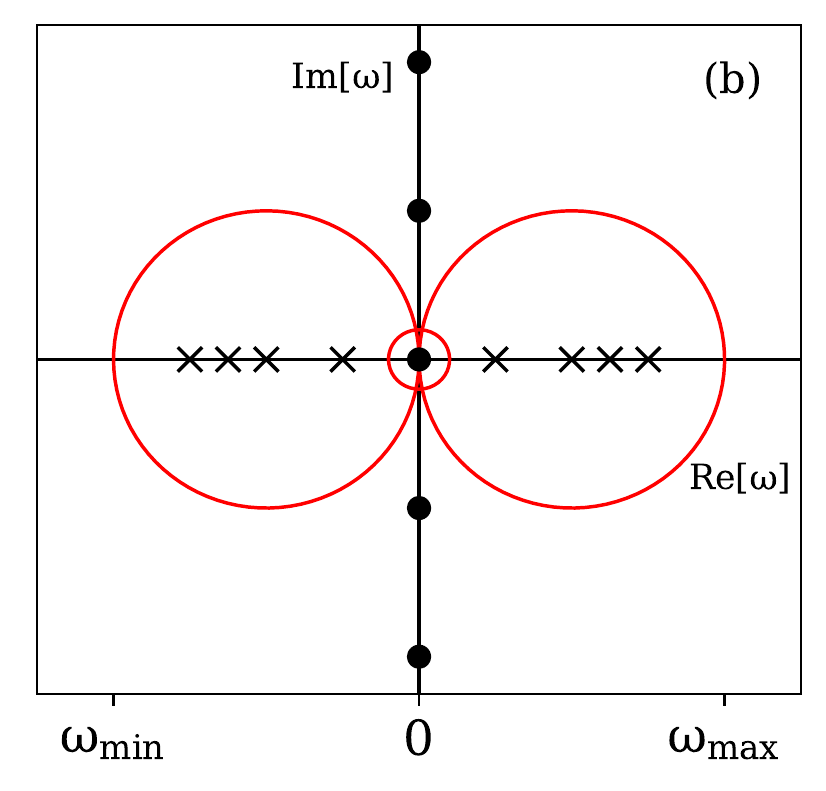}
\includegraphics[width=.45\columnwidth]{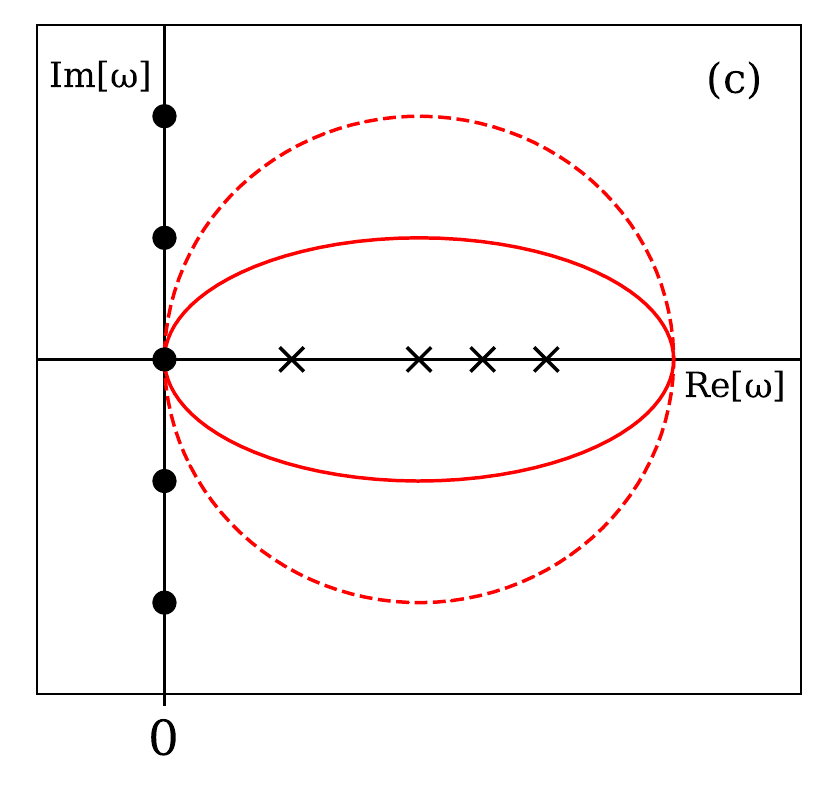}
\end{figure}

\begin{table*}
\centering
\caption{\label{table:fam_sum_rule}
Contributions to the Ikeda sum rule in $^{58}$Ni computed with the FT-PNFAM and the SKO' interaction. Columns labeled \% indicate ${(\Sigma^0(\mathrm{GT}^-) - \Sigma^0(\mathrm{GT}^+))/ 3(N-Z) \times 100\%}$.
}
\renewcommand{\arraystretch}{1.125}
\begin{ruledtabular}
\begin{tabular}{c|d{1.5}d{1.5}d{1.5}|d{1.5}d{1.5}d{1.5}|d{1.5}d{1.5}d{1.5}}
 & \multicolumn{3}{c|}{$\omega < 0$} & \multicolumn{3}{c|}{$\omega > 0$} &  \multicolumn{3}{c}{Total}
\\
 \multicolumn{1}{c|}{T[MeV]}& \multicolumn{1}{c}{$\Sigma^0(\mathrm{GT}^+)$} & \multicolumn{1}{c}{$\Sigma^0(\mathrm{GT}^-)$} & \multicolumn{1}{c|}{\%} &  \multicolumn{1}{c}{$\Sigma^0(\mathrm{GT}^+)$} & \multicolumn{1}{c}{$\Sigma^0(\mathrm{GT}^-)$} & \multicolumn{1}{c|}{\%} & \multicolumn{1}{c}{$\Sigma^0(\mathrm{GT}^+)$} & \multicolumn{1}{c}{$\Sigma^0(\mathrm{GT}^-)$} & \multicolumn{1}{c}{\%}
\\ \hline
 0.0 & 0.0 & 0.0 & 0.0 & 5.87592 & 11.86673 & 99.84677 & 5.87592 & 11.86673 & 99.84677
\\
 0.5 & 0.03442 & 0.00607 &  -0.47251 & 6.38447 & 12.41216 &  100.46141 & 6.41890 & 12.41823 & 99.98890
\\
 1.0 & 0.21849 & 0.08811 &  -2.17301 & 6.25650 & 12.38840 & 102.19831 & 6.47499 & 12.47651 & 100.02530
\\
 1.5 & 0.61711 & 0.19318 &  -7.06545 & 6.18013 & 12.60575 & 107.09364 & 6.79724 & 12.79893 & 100.02819
\\
 2.0 & 1.07395 & 0.45722 & -10.27882 & 6.02681 & 12.64820 & 110.35647 & 7.10076 & 13.10542 & 100.07764
\end{tabular}
\end{ruledtabular}
\end{table*}

Our solution to this problem is to use two contours on either side of the imaginary axis, but place them such that they pass through the pole at $\omega=0$. We find that, so long as no point on the discretized contour lies exactly on $\omega = 0$, the integration yields a stable result with a reasonably dense grid. We can understand this behavior with the concept of fractional residues. One can prove that if $z_0$ is a simple pole of $g(z)$, and $C(r)$ is an arc of the circle defined by $\{r = \abs{z - z_0} \}$ subtended by an arc of angle $\alpha$, then~\cite{Gamelin2001},
\begin{equation}
    \lim\limits_{r \rightarrow 0} \int_{C(r)} g(z)\, dz = \alpha i\, \text{Res}[g(z),z_0]
    \,.
\end{equation}
Thus, if we take $C(r)$ to be $C_1$ in Fig.~\ref{fig:contours}\hyperref[fig:contours]{(a)}, in the limit that $r \rightarrow 0$ (and $C_2$ shifts such that $C_1$+$C_2$ remains a closed contour), we find that the contour integration yields \textit{half} the usual residue, $\lim\limits_{r \rightarrow 0} \int_{C_1(r)+C_2} g(z)\, dz = \pi i\, \text{Res}[g(z),0]$. The inclusion of (a part of) the residue in the contour contributes to the stability of the numerical integration.
Thus, to sum all the strength we can perform two types of integrations: one involving contours that intersect $\omega=0$, and one with a contour around just the pole at $\omega=0$ so that we can subtract (half) its contribution from the former results. The latter integration is numerically stable so long as the pole is centered in the contour, because the value of $\widetilde{S}_F(\omega)/(1 - e^{-\beta \omega})$ will not vary much along the contour. Figure~\ref{fig:contours}\hyperref[fig:contours]{(b)} exemplifies all the contours necessary for a complete calculation. This approach to the sum rule necessarily differs, for example, from the one described in Ref.~\cite{PhysRevC.87.064309} because the thermal prefactor prohibits calculating the strength function near or along the imaginary axis.

While the contour integration method just described works well for high temperatures, for low temperatures the poles off the real axis can also get very close to the boundaries of large contours and induce instabilities in the integration. To address this issue, we deform the contour slightly into an ellipse such that the lowest pole on the imaginary axis is sufficiently far from the edge of the contour, as illustrated in Fig~\ref{fig:contours}\hyperref[fig:contours]{(c)}. We find that maintaining a distance of $2\pi/\beta$ is sufficient. For the smallest temperatures, this approach requires deforming the contour so much that it gets very close to the real axis everywhere, which is also undesirable. In these cases, however, the thermal prefactor is effectively a unit step function and de-excitations are negligible. We therefore neglect de-excitations and exclude the prefactor for temperatures at or below $1.0~\mathrm{GK} \approx 0.1~\mathrm{MeV}$.

Applying this procedure for summing the strength, we use the axially-deformed Skyrme PNFAM to calculate the Ikeda sum rule in $^{58}$Ni for $T=0$--2~MeV. We use the Skyrme functional SKO' fit for the global calculations of Refs.~\cite{Mustonen2016,Ney2020} with an effective axial vector coupling $g_A = 1.0$. We included QRPA energies from $\omega = -50~\mathrm{MeV}$ to $+50~\mathrm{MeV}$, and 88 point Gauss-Legendre quadratures were performed for all contour integrations. Our results, which appear in Table~\ref{table:fam_sum_rule}, are within 0.1\% of the exact sum rule for all temperatures studied. The very small deviation is likely attributed to a small amount of missing strength beyond the energy range considered and the buildup of small numerical errors. Although we do not gain information about the strength distribution from the contour integration, we can isolate the relative contributions from excitations ($\omega > 0$) and de-excitations ($\omega < 0$) to the sum rule.  We see from Table~\ref{table:fam_sum_rule} that de-excitations become increasingly important with temperature, accounting for more than 10\% of the sum rule at a temperature of 2.0~MeV.

\section{Stellar electron capture rates}\label{sec:ec_rates}

To exemplify how our presentation of the FT-QRPA impacts nuclear decays, we compute stellar EC rates. 
The stellar environment just prior to the supernova explosion (presupernova) is characterized by a high temperature $T$, and a product of baryon density $\rho$ and electron-to-baryon ratio $Y_e$ ($\rho Y_e$). Atoms are assumed to be fully ionized and the electron gas is described by a Fermi-Dirac distribution. Under the extreme presupernova conditions, nuclei can be found in highly-excited states. Decays from such states, characterized by negative $Q$-values, are known as de-excitations~\cite{PhysRevC.81.015804,PhysRevC.100.025801}. Therefore, the FT-QRPA and its inclusion of transitions from thermally populated excited states is well suited to describe stellar weak-interaction rates.
In this section, we investigate stellar EC rates using both relativistic and non-relativistic models from Section~\ref{sec:bgt}. Calculations are performed for ${}^{58}$Ni and ${}^{78}$Ni which are known to be of importance for the dynamics of CCSNe~\cite{Sullivan_2015,PhysRevC.83.044619}.

\begin{figure*}
\centering
\includegraphics[width = \linewidth]{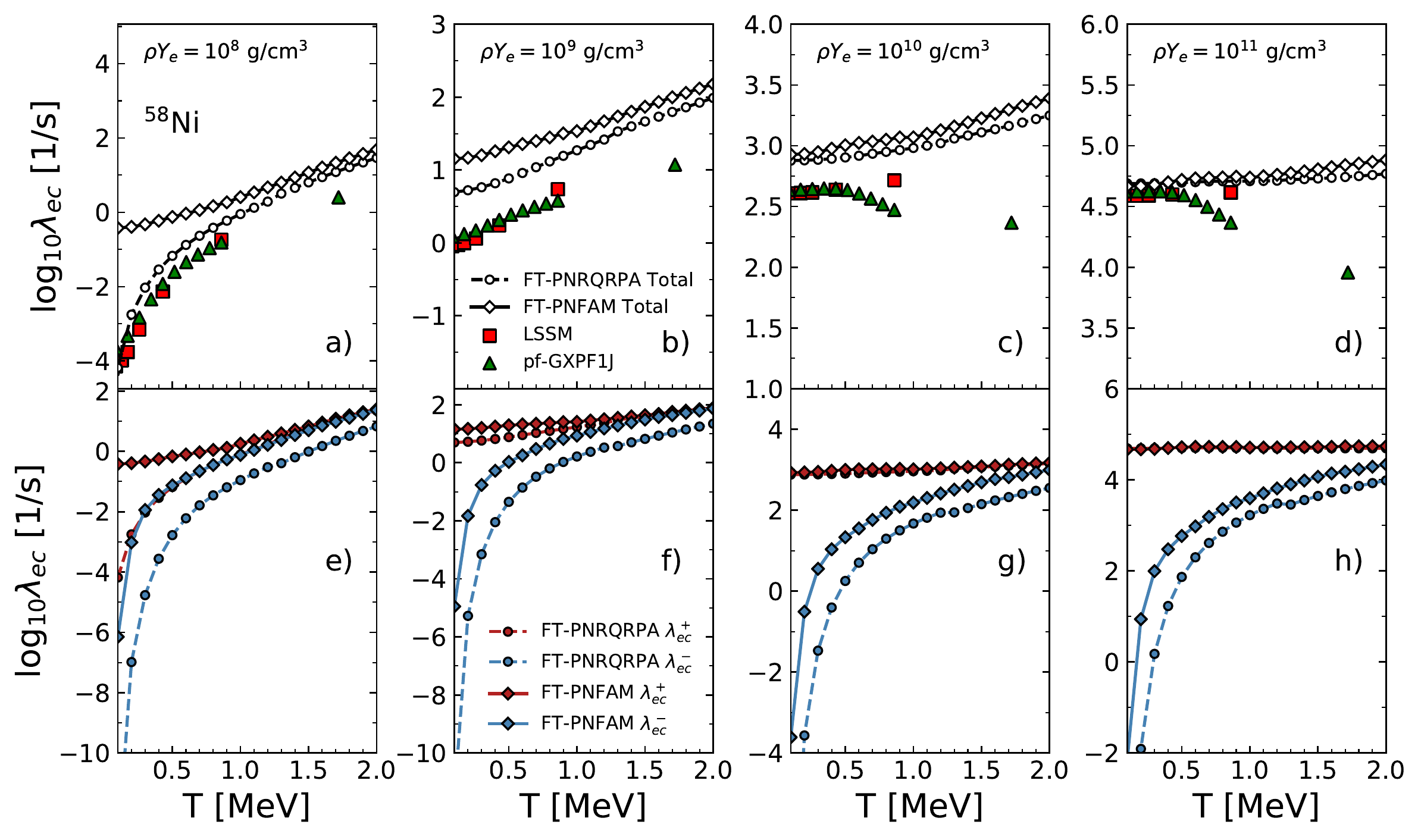}
\caption{(a)--(d) the temperature dependence of the total electron capture (EC) rate $\lambda_{ec}$ in ${}^{58}$Ni in the range $T = 0$--$2$ MeV at stellar densities $\rho Y_e = 10^8$--$10^{11}$ g/cm${}^3$. Results are calculated with the non-relativistic FT-PNFAM (diamonds) and relativistic FT-PNQRPA (circles), and compared with the large scale shell-model calculations (red squares)~\cite{LANGANKE2000481} and shell-model calculations based on the pf-GXPF1J interaction (green triangles)~\cite{PhysRevC.83.044619,Mori_2016}. (e)--(h) contribution of the rate from excitations $\lambda_{ec}^+$ (red) and from de-excitations $\lambda_{ec}^-$ (blue) to the total EC rate as calculated with the FT-PNFAM (diamonds) and FT-PNRQRPA (circles).}\label{fig:ec1}
\end{figure*}

Here we provide a brief outline of the EC rate calculation, with details appearing in Ref.~\cite{PhysRevC.105.055801}. In this work we assume the allowed GT approximation, thus stellar EC rates can be written as
\begin{equation}
\lambda_{ec} = \frac{\text{ln}2}{\kappa} \frac{1}{Z} \sum \limits_{i,f} e^{-\beta \omega_i} |\langle f | \boldsymbol{\sigma} \tau_+ | i \rangle|^2 f(W_0^{i,f}, \mu),
\end{equation}
where $| i (f) \rangle$ are the exact initial (final) nuclear states with energy $\omega_{i (f)}$ and angular momenta $J_{i (f)}$, $\kappa = 6147$~s is the decay constant, and $Z = \sum_i (2J_i+1) e^{-\beta \omega_i}$ is the partition function. The dimensionless phase space factor $f(W_0^{i,f},\mu)$ is an integral over electron energies $W = E_e/ m_e c^2$, where $m_e$ is the electron mass. The energy $W_0^{(i,f)}$ is the maximum electron energy for the transition from $\ket{i}$ to $\ket{f}$ in the $\beta^+$ direction. The phase space integrand is folded with a Fermi-Dirac distribution of electrons, which depends on the electron chemical potential $\mu$
determined from the charge-neutrality condition for a given stellar density $\rho Y_e$~\cite{PhysRevC.105.055801}. 

In the FT-QRPA, the EC rate can be expressed in terms of residues of the strength function
\begin{equation}\label{eq:ec_rate_gt}
\lambda_{ec} = \frac{\text{ln}2}{\kappa} \sum \limits_{k\pm} \text{Res} \left[ \frac{\tilde{S}_{\mathrm{GT}^+}(\omega)}{1-e^{-\beta \omega}}f(W_0[\omega]),\ \Omega_k \right],
\end{equation}
where the summation is performed over FT-QRPA modes with both positive and negative eigenvalues and $W_0[\omega]$ is the threshold energy expressed in terms of the FT-QRPA perturbing energy $\omega$~\cite{PhysRevC.105.055801}. From the PNFAM, the sum of residues can be obtained via complex contour integration, while from the matrix theory the residues are computed from the FT-PNQRPA eigenvectors.
In principle, at finite temperature the excitation energies can take values $-\infty < \omega < \infty$. However, the prefactor $(1-e^{-\beta \omega})^{-1}$ provides a cut-off for large negative energies, and the phase space integral provides a cut-off for large positive energies. To illustrate the contribution of de-excitations, we separate the rate into $\lambda_{ec} = \lambda^+_{ec} + \lambda^-_{ec}$, where $\lambda^+_{ec} = \lambda_{ec} \big\lvert_{\omega>0}$ is the rate from excitations and $\lambda^-_{ec} = \lambda_{ec} \big\lvert_{\omega<0}$ is the rate from de-excitations.
The axial-vector coupling constant $g_A$ is quenched from the free nucleon value to $g_A = 1.0$ in both calculations, consistent with previous works on EC \cite{PhysRevC.105.055801,PhysRevC.102.065804,PhysRevC.83.045807,PhysRevC.80.055801}.

\begin{figure*}
\centering
\includegraphics[width = \linewidth]{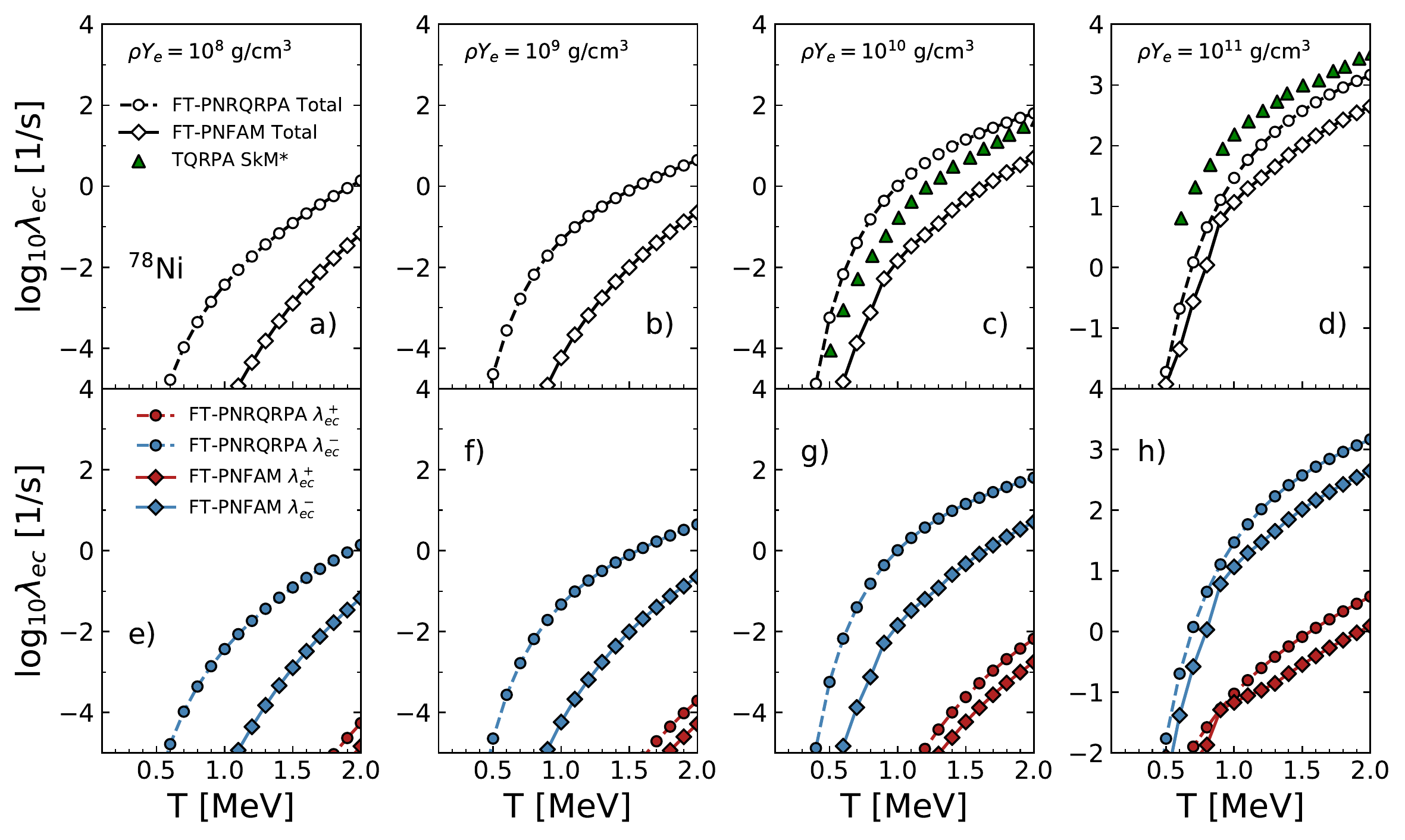}
\caption{The same as in Fig.~\ref{fig:ec1} but for ${}^{78}$Ni. Results are compared with the TQRPA calculations based on the non-relativistic SkM${}^*$ interaction in Ref.~\cite{PhysRevC.100.025801} (green triangles).}\label{fig:ec2}
\end{figure*}


In Fig.~\ref{fig:ec1}\hyperref[fig:ec1]{(a)--(d)} we show the temperature dependence of the allowed $J^\pi = 1^+$ EC rate $(\lambda_{ec}$) in ${}^{58}$Ni at stellar densities in the range $\rho Y_e = 10^8$--$10^{11}$~g/cm${}^3$. Results are displayed for both the relativistic FT-PNQRPA and non-relativistic FT-PNFAM calculations, and compared with the large scale shell-model calculations in Ref.~\cite{LANGANKE2000481} and the shell-model calculations based on the pf-GXPF1J interaction~\cite{PhysRevC.83.044619,Mori_2016}. It is observed that EC rates in ${}^{58}$Ni tend to increase with increasing temperature and increasing stellar density. The former is a consequence of the so-called thermal unblocking effect, where correlations due to finite temperature allow for previously blocked GT transitions (the contribution of de-excitations can be neglected for ${}^{58}$Ni), while the latter stems from the fact that higher density implies higher electron chemical potential, thus allowing for more strength to contribute to the overall rate. 

At the lower stellar densities of $10^8$--$10^9$ g/cm${}^3$ there are substantial differences in the total EC rate as calculated with the FT-PNFAM and FT-PNRQRPA. This is related to employing different model interactions. Namely, the FT-PNFAM employs the non-relativistic Skyrme SkO' interaction and allows for axial deformations, while the relativistic FT-PNQRPA is based on the D3C${}^*$ interaction and is restricted to spherical configurations. This leads to systematic variance between two calculations since different effective interactions predict different quasiparticle bases.
In particular, the FT-PNFAM calculation finds the potential energy surface of ${}^{58}$Ni to be extremely shallow, with an oblate local minimum providing the lowest energy solution. Additionally, this axially-deformed ground state has a non-zero EC $Q$-value, explaining why the rate does not tend to zero at low temperatures and densities.

However, these differences become less important above the critical temperature for shape transition, where small deformation effects in ${}^{58}$Ni get washed out~\cite{Egido_1993}. Indeed, we observe differences between the model calculations decreasing with increasing temperature in Fig.~\ref{fig:ec1}. For higher stellar densities ($10^{10}$--$10^{11}$ g/cm${}^3$) the EC rates are almost independent of the particular details in the GT strength function, implying that only the overall GT strength matters \cite{Langanke_2021,PhysRevC.105.055801}. Hence, the variance between both model calculations decreases with increasing $\rho Y_e$. A fair agreement of the total EC rate, as calculated with our FT-QRPA models, is obtained with respective shell-model calculations, although both FT-QRPA calculations tend to overestimate the shell-model EC rates.

In Fig.~\ref{fig:ec1}\hyperref[fig:ec1]{(e)--(h)} the total EC rate in ${}^{58}$Ni is decomposed into contributions from excitations ($\lambda_{ec}^+$) and de-excitations ($\lambda_{ec}^-$). It is observed that the rate due to de-excitations increases considerably with temperature. However, its overall impact on the total rate ($\lambda_{ec}$) is negligible up to $T = 2$~MeV. A similar trend is confirmed for both model calculations in this work.

To observe how the contribution of de-excitations can lead to significant changes in the EC rates, in Fig. \ref{fig:ec2}\hyperref[fig:ec2]{(a)--(d)} we display the total EC rate in ${}^{78}$Ni for stellar densities in the range $10^8$--$10^{11}$ g/cm${}^3$, with decomposition into the excitation and de-excitation rates in Fig.~\ref{fig:ec2}\hyperref[fig:ec2]{(e)--(h)}. Clearly, the contribution of de-excitations dominates the EC rates for all stellar densities, starting already at $T \approx 0.5$ MeV for both model calculations. It is larger than the respective excitation contribution by more than a few orders of magnitude. There is again variation between the FT-PNFAM and FT-PNRQRPA rates, but the trend is observed in both. Since ${}^{78}$Ni is predicted as doubly-magic (and thus spherical) in both calculations, the variation is a consequence of different effective interactions. In Fig.~\ref{fig:ec2}\hyperref[fig:ec2]{(c)--(d)} we have also displayed the data from Ref.~\cite{PhysRevC.100.025801} based on the TQRPA with the non-relativistic SkM${}^*$ interaction (green triangles). With the proper interpretation of the FT-QRPA strength function, as clarified in this work, we can reproduce the TQRPA rates with both the FT-PNRQRPA and FT-PNFAM. Otherwise, both FT-QRPA rates would considerably underestimate the TQRPA rates. Again, note that all three calculations use different effective interactions, thus agreement between the different models, FT-QRPA using either matrix FT-PNRQRPA or FT-PNFAM formulation, and the TQRPA is quite satisfying.

To explain why de-excitations dominate in ${}^{78}$Ni but contribute much less in ${}^{58}$Ni, we note that ${}^{58}$Ni is located near the valley of stability, with a considerable amount of the GT${}^+$ strength (cf.~Table~\ref{table:fam_sum_rule}). Therefore, the EC rates are going to have a significant contribution from the GT${}^+$ strength even at low temperatures. However, due to its neutron excess, GT${}^+$ strength in ${}^{78}$Ni is substantially suppressed. 
On the other hand, because of its large GT${}^-$ strength, with increasing temperature the thermal prefactor in Eq.~\eqref{eq:ec_rate_gt} allows for more strength to contribute to the total EC rate. This sudden jump in the de-excitation contribution at $\rho Y_e = 10^8$ g/cm${}^3$ occurs around $T = 0.5$ MeV, and at somewhat smaller temperatures at higher densities. Although with increasing stellar density the rate from excitations also increases, even at $\rho Y_e = 10^{11}$~g/cm${}^3$ the de-excitation contribution for $T > 0.5$~MeV is larger by a few orders of magnitude.

\section{Conclusions}

In this work, we presented a detailed investigation of the FT-QRPA derived in terms of thermal averages over the FT-HFB (or FT-HFBCS) statistical ensemble. We demonstrated connections between different formulations of the FT-QRPA linear response equations used in the literature and showed that all physical quantities are independent of the formulation used. Furthermore, we illustrated how the temperature-dependent metric formulation presents itself as the natural choice for the FT-QRPA equation because its properties mirror those of the zero-temperature QRPA. We then elucidated several properties of the FT-QRPA strength function, showing that the thermal prefactor and strength from both excitations and de-excitations are essential. We also located individual transitions in the FT-QRPA strength function, demonstrating that the strength at a given FT-QRPA energy approximates a thermal average over all transitions with a fixed transition energy.

To illustrate the correctness of our discussion, we employed two recently developed models in the charge-exchange channel: the non-relativistic, axially deformed FT-PNFAM based on the Skyrme EDFs and the relativistic, spherical FT-PNQRPA based on the meson-exchange D3C${}^*$ EDF. We verified analytically, and then numerically with both models for the case of $^{58}$Ni, that the Ikeda sum rule is satisfied with the correct treatment of the FT-QRPA strength function. As a physical application, we then computed stellar EC rates for $^{58,78}$Ni for a range of stellar densities and temperatures.
Although de-excitations play almost no role in the EC rate for ${}^{58}$Ni up to $T = 2$ MeV, they dominate the EC rate in ${}^{78}$Ni starting already from $T \approx 0.5$ MeV at $\rho Y_e = 10^8$ g/cm${}^3$ because of its large negative $Q$-value. Only with the inclusion of de-excitations are we able to obtain reasonable agreement between our calculations and the EC rates based on the non-relativistic TQRPA calculations in Ref.~\cite{PhysRevC.100.025801}. Similar trends are confirmed for both relativistic and non-relativistic FT-QRPA calculations in this work.

Recently, in Ref.~\cite{PhysRevC.105.055801} we have demonstrated that present models produce consistent results for EC rates of nuclei in the $N = 50$ region, thus yielding small uncertainties for main CCSNe observables. This gives us confidence that the main correlations necessary for the description of EC rates are well enveloped, at least under extreme stellar conditions. Applying present models to other astrophysical scenarios, as well as to systematic calculation of EC rates across the nuclear chart, remain important tasks that should be pursued to further constrain astrophysical uncertainties.

\begin{acknowledgments}
This work was supported by the US National Science Foundation under Grant PHY-1927130 (AccelNet-WOU: International Research Network for Nuclear Astrophysics [IReNA]). A.R. and N.P. acknowledge support by the QuantiXLie Centre of Excellence, a project co-financed by the Croatian Government and European Union through the European Regional Development Fund, the Competitiveness and Cohesion Operational Programme (KK.01.1.1.01.0004).
\end{acknowledgments}

\appendix
\section{FT-QRPA proofs}\label{sec:appa}

In this section we enumerate several properties of the FT-QRPA matrix for the class of eigenvalue problems originating from Eq.~\eqref{eq:ft_linear_response_p},
\begin{equation}\label{eq:p_dept_FTQRPA}
    (M \tsp)\ \delta \widetilde{R}_p^k = \Omega^k\ \delta \widetilde{R}_p^k
    \,.
\end{equation}
Similar proofs for the zero-temperature RPA matrix are well-known, see for example Refs.~\cite{Ring2004} and~\cite{Ripka1986}. Here we extend these proofs to the $p$-dependent FT-QRPA eigenvalue problems and show they largely still apply in the same way as in the zero-temperature case, with a few caveats related to the temperature dependence. Moreover, we show that several key quantities are independent of $p$, demonstrating the equivalence of all formulations. 

All of the proofs that follow require that the two-quasiparticle basis is constructed such that $E_\alpha > E_\beta$ and therefore $T$ is real and positive definite. Additionally, for the sake of simplicity we assume $\Omega^k \neq 0$.

\begin{property}\label{prop:LR_EVP}
Given the right eigenvalue problem in Eq.~\eqref{eq:p_dept_FTQRPA}, the matrix $M\tsp$ has a left eigenvector ${\tlp = (\trr_{1-p}^k)^\dagger M}$ with eigenvalue $\Omega^k_{(L)} = \Omega^{k*}$.
\end{property}
\begin{proof}
By definition,
\begin{equation}\label{eq:left_FTQRPA}
    \tlp^k (M\tsp) = \Omega^k_{(L)} \tlp^k
    \,.
\end{equation}
Using the claim for $\tlp$ as an ansatz, the transpose of 
Eq.~\eqref{eq:left_FTQRPA} states,
\begin{equation}
\begin{aligned}
    (M\tsp)^T (\tlp^{k})^{T}
    =(M M \ts_{1-p}\ \trr_{1-p}^k)^*
    =\Omega^{k*} (\tlp^{k})^{T}
    \,.
\end{aligned}
\end{equation}
The second equality follows from the properties $M^2 = 1$, and if $T>0$, then $T^{1-p}$ is real and $\tsp^\dagger = \ts_{1-p}$.
\end{proof}

\begin{property}\label{prop:orthogonal}
Eigenvectors of $M\tsp$ are orthogonal if $\Omega^k \neq \Omega^{l*}$. 
\end{property}
\begin{proof}
For the non-hermitian eigenvalue problem, the norm is defined as the scalar product of left and right eigenvectors, and is independent of $p$,
\begin{equation}
\begin{aligned}
    \tlp^k \trp^l = (\delta R^{k})^{\dagger} M T\ \delta R^l
    \,.
\end{aligned}
\end{equation}
Now consider the left and right eigenvalue problems for any two solutions. Let us take the difference between
\begin{equation}
    \tlp^l \left[(M \tsp) \trp^k = \Omega^k \trp^k\right]
    \,,
\end{equation}
and
\begin{equation}
    \left[ \tlp^l (M \tsp) = \Omega^l_{(L)} \tlp^l\right] \trp^k
    \,.
\end{equation}
The $p$-dependence on the left-hand-side disappears, giving $(\delta R^k)^\dagger \ts_M \delta R^l$ for both equations, which cancels after taking the difference. We are left with,
\begin{equation}\label{eq:orthonorm_proof}
    0 = (\Omega^k - \Omega^{l*}) (\delta R^{l })^\dagger MT \delta R^k
\end{equation}
where we have used that $\Omega^l_{(L)} = \Omega^{l*}$ from Property~\ref{prop:LR_EVP}.
So long as $\Omega^k \neq \Omega^{l*}$, the eigenvectors are orthogonal. This result is independent of $p$.
\end{proof}

\begin{corollary}\label{cor:complex_norm}
Eigenvectors belonging to complex eigenvalues have zero norm.
\end{corollary}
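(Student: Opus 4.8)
The plan is to obtain the corollary as an immediate specialization of the orthogonality relation already derived in the proof of Property~\ref{prop:orthogonal}. The essential input is Eq.~\eqref{eq:orthonorm_proof}, namely $0 = (\Omega^k - \Omega^{l*})\,(\delta R^{l})^\dagger M T\, \delta R^k$, which was established for an arbitrary pair of right eigenvectors labeled $k$ and $l$. Crucially, that derivation nowhere assumed $k \neq l$ (it only used the cancellation of the $p$-independent term $(\delta R^k)^\dagger \widetilde{\mathcal{S}}_M \delta R^l$ and the identity $\Omega^l_{(L)} = \Omega^{l*}$ from Property~\ref{prop:LR_EVP}), so I am free to set $l = k$.

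First I would specialize Eq.~\eqref{eq:orthonorm_proof} to coincident indices, giving $0 = (\Omega^k - \Omega^{k*})\,(\delta R^{k})^\dagger M T\, \delta R^k$. Next I would recall that $(\delta R^{k})^\dagger M T\, \delta R^k = \tlp^k\, \trp^k$ is precisely the norm of the $k$-th mode defined at the outset of the proof of Property~\ref{prop:orthogonal}, a quantity shown there to be independent of $p$. Finally, invoking the hypothesis that $\Omega^k$ is complex, i.e. $\Im\Omega^k \neq 0$, the scalar prefactor $\Omega^k - \Omega^{k*} = 2i\,\Im\Omega^k$ is nonzero and may be divided out, which forces $\tlp^k\, \trp^k = 0$. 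This is exactly the statement of Corollary~\ref{cor:complex_norm}.

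I do not expect any substantive obstacle: the result follows by inspection once Eq.~\eqref{eq:orthonorm_proof} is in hand. The only point meriting a line of care is confirming that the earlier derivation genuinely survives at $l=k$; a quick check of the steps in Property~\ref{prop:orthogonal} shows neither the cancellation of the $\widetilde{\mathcal{S}}_M$ term nor the substitution $\Omega^l_{(L)} = \Omega^{l*}$ requires distinct indices, so the relation holds without modification and the specialization is legitimate.
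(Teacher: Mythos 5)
Your proposal is correct and is exactly the paper's own argument: the paper likewise specializes Eq.~\eqref{eq:orthonorm_proof} to $l=k$ and notes that $\Im[\Omega^k]\neq 0$ implies $\Omega^k - \Omega^{k*}\neq 0$, forcing the norm $(\delta R^{k})^\dagger M T\, \delta R^k$ to vanish. Your extra verification that the derivation of Eq.~\eqref{eq:orthonorm_proof} never assumed $k\neq l$ is a sound point of care that the paper leaves implicit.
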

\begin{proof}
If $\Im[\Omega^k] \neq 0$, then  $\Omega^k - \Omega^{k*} \neq 0$, and Eq.~\eqref{eq:orthonorm_proof} requires the norm to be zero.
\end{proof}

\begin{property}\label{prop:stability}
If the Hermitian matrix $\ts_M = T \mathcal{H} T + \mathcal{E} T$ has all positive eigenvalues, the eigenvalues of $M\tsp$ are real and the sign of the eigenvalue matches the sign of the norm.
\end{property}
\begin{proof}
Multiplying the right eigenvalue problem for $M\tsp$ on the left by the corresponding left eigenvector leads to
\begin{equation}\label{eq:SM_proof}
\begin{aligned}
    (\delta R^k)^\dagger \ts_M \delta R^k
    &= \Omega^k  (\delta R^{k})^\dagger MT \delta R^k
\end{aligned}
\end{equation}
The right-hand side is the eigenvalue times the norm and the left-hand side is an expectation value of a Hermitian operator. To distinguish solutions of $\ts_M$ in the present context, we denote them with a subscript $(M)$. Using the completeness relation for the solutions of $\ts_M$ we can show,
\begin{equation}\label{eq:herm_expectation}
\begin{aligned}
    (\delta R^k)^\dagger \ts_M \delta R^k
    = \sum\limits_{n} \Omega^n_{(M)} \left\lvert (\delta R^k)^\dagger \delta {R}_{(M)}^n \right\rvert^2
    \,.
\end{aligned}
\end{equation}
Since $\ts_M$ is Hermitian, $\Omega_{(M)}^n$ are real and Eq.~\eqref{eq:herm_expectation} proves the expectation value on the left-hand side is real. Inserting this result into Eq.~\eqref{eq:SM_proof}, we can make the following conclusions:
\begin{enumerate}
    \item If $\Im[\Omega^k] \neq 0$, by Corollary~\ref{cor:complex_norm}, the right-hand side of Eq.~\eqref{eq:SM_proof} vanishes, and therefore $\ts_M$ cannot have all positive eigenvalues.
    \item If all of the eigenvalues of $\ts_M$ are positive, the eigenvalues of $M\tsp$ are real and ${\text{sign}[\Omega^k] = \text{sign}[ (\delta R^{k})^{\dagger} MT \delta R^k]}$.
\end{enumerate}
\end{proof}

\begin{property}\label{prop:UD_EVP}
The right eigensolutions of $M\tsp$ come in two sets. For every solution $\{\Omega^k, \trp^k\}$, there is an orthogonal solution $\{-\Omega^{k*}, \gamma\ \trp^{k*} \}$.
\end{property}
\begin{proof}
The FT-QRPA matrix obeys the relation,
\begin{equation}
    ( M\tsp )^* = - \gamma\, (M\tsp)\, \gamma
    ,\quad
    \gamma \equiv
    \begin{pmatrix}
        0&0&0&1 \\ 0&0&1&0 \\ 0&1&0&0 \\ 1&0&0&0
    \end{pmatrix}
    \,.
\end{equation}
From the complex conjugate eigenvalue problem, we can deduce,
\begin{equation}
\begin{aligned}
    ( M\tsp \trp^k &= \Omega^k \trp^k )^*
    \\
    - \gamma M\tsp \gamma\ \trp^{k*} &= \Omega^{k*} \trp^{k*}
    \\
    M\tsp (\gamma \trp^{k*}) &= - \Omega^{k*} ( \gamma  \trp^{k*} )
    \,.
\end{aligned}
\end{equation}
By Property~\ref{prop:orthogonal}, the eigenvectors $\gamma \trp^{k*}$ are orthogonal to $\trp^k$.
We may call this result an upper-lower duality (by virtue of $\gamma$ swapping the upper and lower components of $\delta R^k$), while Property~\ref{prop:LR_EVP} defines a left-right duality. We can therefore summarize the FT-QRPA solutions accordingly:
\begin{table}[h!]
\renewcommand{\arraystretch}{1.125}
\begin{ruledtabular}
\begin{tabular}{clllc}
     &   & Upper  & Lower  & \\
    \hline \noalign{\vskip 1ex}  
     & Right & $\{\Omega^k,\ \trp^k \}$ & $\{-\Omega^{k*},\ \gamma \trp^{k*} \}$ &\\
     & Left & $\{\Omega^{k*},\ (\trr_{1-p}^k)^\dagger M \}$ & $\{-\Omega^{k},\  (\trr^k_{1-p})^T \gamma M \}$ &
\end{tabular}
\end{ruledtabular}
\end{table}
\end{proof}

\begin{property}\label{prop:linearly_indept}
If all the eigenvalues of $M\tsp$ are real, the eigenvectors form a linearly independent complete set that obeys the closure relation in Eq.~\eqref{eq:closure2}.
\end{property}
\begin{proof}
The solutions are linearly independent if an only if
\begin{equation}
    \sum\limits_k c^k \trp^k = 0
    \,,
\end{equation}
is satisfied by $c^k=0\ \  \forall\ \ k$. Multiplying by any left eigenvector, the orthogonality relation from Property~\ref{prop:orthogonal} gives
\begin{equation}
    \tlp^l \sum\limits_k c^k \trp^k = c^l\  (\delta R^l)^\dagger\, T M\, \delta R^l
    \,.
\end{equation}
If all the eigenvalues are real, then all norms are non-zero, linear independence is satisfied, and the number of linearly independent vectors is equal to the dimension of $M\tsp$, forming a complete set. Furthermore, the expression,
\begin{equation}\label{eq:closure1}
    \sum\limits_k \trp^k\, \tlp^k
\end{equation}
applied to any vector in the complete set returns the same vector and is therefore equal to unity, proving the closure relation. 

We can expand Eq.~\eqref{eq:closure1} to write the closure relation in terms of the upper and lower dual eigenvectors. If we take all positive eigenvalues to be in the set $\{\Omega^k,\ \trp^k\}$, by Property~\ref{prop:stability} they have all have positive norms. Accordingly, the set $\{-\Omega^{k*},\ \gamma\, \trp^{k*}\}$ has all negative eigenvalues and negative norms. We can therefore normalize the upper and lower dual vectors to $+1$ and $-1$, respectively, and can write the closure relation as,
\begin{equation}\label{eq:closure2}
        \sum\limits_{k > 0} \trp^k\, \tlp^k - (\gamma \trp^{k*}) ((\trr_{1-p}^k)^T \gamma M) = 1
        \,.
\end{equation}
where the sum is over modes with positive norm.
\end{proof}

\begin{property}
The $p$-dependent solutions of $M\tsp$ can be used to calculate the same physical transition amplitude.
\end{property}

\begin{proof}
The physical transition amplitude contains exactly one factor of $T$ as a result of taking the ensemble average. We can demonstrate this using the equations of motion of the FT-QRPA~\cite{Sommermann1983}. For excitation operator,
\begin{equation}
        \Gamma^{k \dagger} = \sum\limits_{\mu > \nu} 
        P^k_{\mu\nu} b^\dagger_\mu b_\nu
        +X^k_{\mu\nu} b^\dagger_\mu b^\dagger_\nu
        -Y^k_{\mu\nu} b_\nu b_\mu
        -Q^k_{\mu\nu} b^\dagger_\nu b_\mu
        \,,
\end{equation}
they read,
\begin{equation}
    \left\langle \left[\delta \Gamma^k, \left[H,\Gamma^{k \dagger}\right]\right] \right\rangle = \Omega^k \left\langle \left[ \delta \Gamma^k, \Gamma^{k \dagger} \right] \right\rangle
    \,.
\end{equation}
This implies the amplitudes correspond to thermal averages as given in Eq. (\ref{eq:FTamplitudes}), so the ensemble averaged transition amplitudes are
\begin{equation}\label{F_ensemble_avg_A1}
    \left\langle \left[ \hat{F}, \hat{\Gamma}^{k \dagger} \right] \right\rangle 
    = \mathcal{F}^\dagger\, T\, \delta R^k
    \,.
\end{equation}
Therefore, to get the physical transition amplitude from the eigenvectors of $M\tsp$, we simply need to trace with the appropriate factor of $T$, i.e.,
\begin{equation}\label{F_ensemble_avg_A2}
    \left\langle \left[ \hat{F}, \hat{\Gamma}^{k \dagger} \right] \right\rangle 
    =
    (T^{1-p}\mathcal{F})^\dagger \trp^k
    \,.
\end{equation}
\end{proof}

\section{Derivation of the Ikeda sum rule within the FT-H(F)B or FT-H(F)BCS}\label{sec:appb}

In the proton-neutron single-particle basis the external field operator can be written as
\begin{equation}
\hat{F} = \sum \limits_{pn} F_{pn} c_p^\dag c_n,
\end{equation}
where $F_{pn}$ is the matrix element of the external field operator and $c^\dag_{p(n)}, c_{p(n)}$ proton (neutron) creation and annihilation operators respectively. To write the above expression in the quasiparticle basis we use the Bogoliubov transformation \cite{Ring2004}
\begin{equation}
c_p^\dag = \sum \limits_\pi U_{p \pi} a_\pi + V^*_{p \pi} a_\pi^\dag, \quad c_n = \sum \limits_\nu U_{n \nu}^* a_\nu^\dag + V_{n \nu} a_\nu,
\end{equation}
where $\pi$$(\nu)$ denotes the quasi-proton(neutron) states, $a^\dag_{p(n)}, a_{p(n)}$ are the corresponding quasiparticle operators and $U$, $V$ are the Bogoliubov matrices. The external field operator $\hat{F}$ in the quasiparticle basis has the form as in Eq.~\eqref{eq:ftqrpa_field_op} with components
\begin{align}
\begin{split}
F^{11}_{\pi \nu} &= (U^\dag F U)_{\pi \nu}, \quad (F^\dag)^{11}_{\pi \nu} = (U^T F U^*)_{\pi \nu}, \\  
F^{20}_{\pi \nu} &= (U^\dag F V^*)_{\pi \nu}, \quad (F^\dag)^{20}_{\pi \nu} = (U^T F^\dag V)_{\pi \nu}, \\  
F^{02}_{\pi \nu} &= (V^T F U)_{\pi \nu}, \quad (F^\dag)^{02}_{\pi \nu} = (V^\dag F^\dag U^*)_{\pi \nu}, \\  
F^{\bar{11}}_{\pi \nu} &= (V^T F V^*)_{\pi \nu}, \quad (F^\dag)^{\bar{11}}_{\pi \nu} = (V^\dag F^\dag V)_{\pi \nu}. \\  
\end{split}
\end{align}
We can now evaluate the expression in Eq.~\eqref{eq:thermal_avg_of_comm} to get
\begin{align}
\begin{split}
\langle [ \hat{F}^\dag, \hat{F} ] \rangle &= \sum \limits_{pn} F_{pn} (F^\dag)_{pn} \left\{ (U f U^\dag)_{nn} \left[ (U^* U^T)_{pp} + (V V^\dag)_{pp} \right] \right. \\
&- (U^* f U^T)_{pp} \left[ (U U^\dag)_{nn} + (V^* V^T)_{nn} \right]    \\
&- (V^* f V^T)_{nn} \left[ (U^* U^T)_{pp} + (V V^\dag)_{pp} \right] \\
&+ (V f V^\dag)_{pp} \left[ (U U^\dag)_{nn} + (V^* V^T)_{nn} \right] \\
&+ \left. (U^* U^T)_{pp} (V^* V^T)_{nn} - (V V^\dag)_{pp} (U U^\dag)_{nn} \right\}.
\end{split}
\end{align}
Using the unitarity of the Bogoliubov transformation \cite{Ring2004}
\begin{equation}
U U^\dag + V^* V^T = 1, \quad U^* U^T + V V^\dag = 1,
\end{equation}
we can rewrite above expression as
\begin{align}
\begin{split}
\langle [ \hat{F}^\dag, \hat{F} ] \rangle &= \sum \limits_{pn} F_{pn} (F^\dag)_{pn} \left\{ (U f U^\dag)_{nn} + [V^* (1-f) V^T]_{nn} \right. \\
&- \left. (U^* f U^T)_{pp} - [V (1-f) V^\dag]_{pp} \right\} \\
&= \sum \limits_{pn} F_{pn} (F^\dag)_{pn} \left\{ n_n - n_p \right\},
\end{split}
\end{align}
where $n_n =  (U f U^\dag)_{nn} + [V^* (1-f) V^T]_{nn}$ and $n_p = (U^* f U^T)_{pp} + [V (1-f) V^\dag]_{pp}$ denote neutron and proton number of a given single-particle state. By inserting the GT operator as the external field
\begin{equation}
F_{pn} = \langle p || \boldsymbol{\sigma} \tau_- || n \rangle, \quad (F^\dag)_{pn} = \langle n || \boldsymbol{\sigma} \tau_+ || p \rangle,
\end{equation}
we get the well known result
\begin{equation}
\langle [ \hat{F}^\dag, \hat{F} ] \rangle = 3(N-Z).
\end{equation}
Same result can be reproduced within the H(F)-BCS formalism by replacing the Bogoliubov transformation with
\begin{equation}
c_p^\dag = \delta_{p \pi} (u_\pi a_\pi^\dag + v_\pi a_{\bar{\pi}}), \quad c_n = \delta_{n \nu} (u_\nu a_\nu + v_\nu a_{\bar{\nu}}^\dag),
\end{equation}
where $u,v$ are the H(F)-BCS amplitudes, and $\bar{\pi}$($\bar{\nu}$) denotes the time-reversed quasiparticle states~\cite{Ring2004}.

\section{Derivation of the Ikeda sum rule within the FT-PNQRPA}\label{sec:appc}

In this section we show that expression in Eq.~\eqref{eq:ft_qrpa_ikeda_sum_rule} reduces to the well-known result of the Ikeda sum rule, $3(N-Z)$~\cite{IKEDA1963271}. We can rewrite Eq.~\eqref{eq:ft_qrpa_ikeda_sum_rule} in matrix form as
\begin{align}\label{eq:ikeda_sum_appc}
\begin{split}
&\Sigma^0(\hat{F}) - \Sigma^0(\hat{F}^\dag) = \int \limits_{-\infty}^\infty d \omega \left[ \frac{dB}{d \omega}(\hat{F}) - \frac{d B}{d \omega} (\hat{F}^\dag) \right] \\
&= \sum \limits_{k\pm>0} \left( \frac{|\langle [\Gamma^k, \hat{F}] \rangle|^2}{1-e^{-\beta \Omega_k}} -  \frac{|\langle [\Gamma^k, \hat{F}^\dag] \rangle|^2}{1-e^{\beta \Omega_k}} \right) \\
&\qquad\qquad- \left( \frac{|\langle [\Gamma^k, \hat{F}^\dag] \rangle|^2}{1-e^{-\beta \Omega_k}} -  \frac{|\langle [\Gamma^k, \hat{F}] \rangle|^2}{1-e^{\beta \Omega_k}} \right) \\
&= \sum \limits_{k\pm>0} \left( |\langle [\Gamma^k, \hat{F}]\rangle|^2 - |\langle [\Gamma^k, \hat{F}^\dag]\rangle|^2 \right) \\
&= \begin{pmatrix}
\langle [\Gamma_1, \hat{F}] \rangle^* \ldots \langle [\Gamma_{2N}, \hat{F}] \rangle^* \langle [\Gamma_1, \hat{F}^\dag] \rangle \ldots \langle [\Gamma_{2N}, \hat{F}^\dag] \rangle
\end{pmatrix} \\
& \times \begin{pmatrix}
1 &  &  &  &  & \\
  & \ddots &  & & & \\
  &  &  1  &  & &  \\
    &  &    & -1  & &  \\
      &  &    &  & \ddots &  \\
        &  &    &  & & -1  \\
\end{pmatrix} \begin{pmatrix}
\langle [\Gamma_1, \hat{F} ] \rangle \\
\vdots \\
\langle [\Gamma_{2N}, \hat{F} ] \rangle \\
\langle [\Gamma_1, \hat{F}^\dag ] \rangle^* \\
\vdots \\
\langle [\Gamma_{2N}, \hat{F}^\dag ] \rangle^* \\
\end{pmatrix},
\end{split}
\end{align}
where $N$ is the number of (quasi)particle-(quasi)hole pairs, and the total dimension of the FT-PNQRPA matrix is $4N \times 4N$. The FT-PNQRPA equation can be written in the matrix form as already introduced in Sec.~\ref{sec:props}
\begin{equation}
[ \mathcal{E} + T \mathcal{H}] T \mathcal{X} = M {T} \mathcal{X} \mathcal{O},
\end{equation}
where $\mathcal{X}$ is the matrix whose columns consist of the temperature-independent FT-PNQRPA eigenvectors
\begin{equation}
\mathcal{X} = \begin{pmatrix}
P_1 \ldots P_{2N} & Q_1^* \ldots Q_{2N} \\
X_1 \ldots X_{2N} & Y_1^* \ldots Y_{2N} \\
Y_1 \ldots Y_{2N} & X_1^* \ldots X_{2N} \\
Q_1 \ldots Q_{2N} & P_1^* \ldots P_{2N} \\
\end{pmatrix},
\end{equation}
while $\mathcal{O} = \text{diag}(\Omega_1 \ldots \Omega_N, -\Omega_1, \ldots -\Omega_N)$ contains the FT-PNQRPA eigenvalues (cf.~Sec.~\ref{sec:props}). Starting from the FT-PNQRPA phonon operator in Eq.~\eqref{eq:ft_phonon_operator} we can evaluate the required ensemble averages according to Eq.~\eqref{eq:F_ensemble_avg}.
It is straightforward to verify that
\begin{equation}
\begin{pmatrix}
\langle [\Gamma_1, \hat{F} ] \rangle \\
\vdots \\
\langle [\Gamma_{2N}, \hat{F} ] \rangle \\
\langle [\Gamma_1, \hat{F}^\dag ] \rangle^* \\
\vdots \\
\langle [\Gamma_{2N}, \hat{F}^\dag ] \rangle^* \\
\end{pmatrix} = \mathcal{X}^\dag T \mathcal{F}.
\end{equation}
Using the normalization condition for the FT-PNQRPA eigenvectors
\begin{equation}
\mathcal{X}^\dag T M \mathcal{X} = M,
\end{equation}
Eq.~\eqref{eq:ikeda_sum_appc} can be written as
\begin{align}
\begin{split}
&\Sigma^0(\hat{F}) - \Sigma^0(\hat{F}^\dag) = \mathcal{F}^\dag T \mathcal{X} M \mathcal{X}^\dag T \mathcal{F} = \mathcal{F}^\dag T M \mathcal{F} \\
&=  F^{11}_{\pi \nu} (F^\dag)^{11}_{\pi \nu}(f_\nu - f_\pi) + (1-f_\nu - f_\pi) F^{20}_{\pi \nu} (F^\dag)^{20}_{\pi \nu} \\
&- (1-f_\pi -f_\nu) F^{02}_{\pi \nu} (F^\dag)^{02}_{\pi \nu} - F^{\bar{11}}_{\pi \nu} (F^\dag)^{\bar{11}}_{\pi \nu} (f_\nu - f_\pi),
\end{split}
\end{align}
which agrees with the expression for $ \langle [ \hat{F}^\dag, \hat{F} ] \rangle$ of Eq.~(\ref{eq:thermal_avg_of_comm}).

\clearpage
\newpage
\bibliographystyle{apsrev4-1}
%

\end{document}